\numberwithin{equation}{section}
\let\OLDthebibliography\thebibliography
\renewcommand\thebibliography[1]{
  \OLDthebibliography{#1}
  \setlength{\parskip}{2pt}
  \setlength{\itemsep}{2pt plus 2pt}
}
\def\vev#1{\left\langle #1 \right\rangle}
\def\rep#1{{\tt [#1]}}
\def\bb#1{\mathbb{#1}}
\def\cC{\mathcal C}
\def\cN{\mathcal N}
\def\cO{\mathcal O}
\def\cS{\mathcal S}
\def\tr{\mathrm{tr}}
\def\({\left(}
\def\){\right)}
\def\[{\left[}
\def\]{\right]}
\def\abs#1{{\left| #1\right|}}
\def\vev#1{\langle #1\rangle}
\def\ket#1{\left| #1\right\rangle}
\def\ol#1{\overline{#1}}
\newcommand{\Clebsch}[7]{S^{#1 ,}{}^{#2}_{#5}{}^{#3}_{#6}{}^{#4}_{#7}}
\newtheorem{thm}{Theorem}
\newtheorem{cor}[thm]{Corollary}
\newtheorem{lem}[thm]{Lemma}
\newcommand{\bea}{\begin{eqnarray}} 
\newcommand{\eea}{\end{eqnarray}} 
\newcommand\cM{ \mathcal{M}} 
\newcommand\mC{ \mathbb{C} }
\begin{document}
\thispagestyle{empty}

\rightline{OIQP-16-03}
\rightline{QMUL-PH-16-15}

\vskip 20mm

{\LARGE 
\centerline{\bf Flavour singlets in gauge theory as Permutations}
}

\vskip 12mm

\centerline{
{\large \bf Yusuke Kimura ${}^{a,}$\footnote{ {\tt londonmileend\_at\_gmail.com}}},
{\large \bf Sanjaye Ramgoolam ${}^{b,c,}$\footnote{ {\tt s.ramgoolam\_at\_qmul.ac.uk}}}
{\large \bf and Ryo Suzuki ${}^{d,}$\footnote{ {\tt rsuzuki.mp\_at\_gmail.com}}}
}

\vskip 12mm

\centerline{{\it ${}^a$ Okayama Institute for Quantum Physics (OIQP)},}
\centerline{{\it Furugyo-cho 1-7-36, Naka-ku, Okayama, 703-8278, Japan}}

\vspace{.4cm}
\centerline{{\it ${}^b$ Centre for Research in String Theory, School of Physics and Astronomy},}
\centerline{{\it Queen Mary University of London,}} 
\centerline{{\it Mile End Road, London E1 4NS, UK}}
\vspace{.2cm}
\centerline{{\it ${}^c$ National Institute for Theoretical Physics, }}
\centerline{{\it School of Physics and Mandelstam Institute for Theoretical Physics, }} 
\centerline{{\it University of Witwatersrand, Wits, 2050, South Africa}}

\vspace{.4cm}
\centerline{{\it ${}^d$ ICTP South American Institute for Fundamental Research},}
\centerline{{\it Instituto de F\'isica Te\'orica, UNESP - Universidade Estadual Paulista},}
\centerline{{\it Rua Dr. Bento Teobaldo Ferraz 271, 01140-070, S\~ao Paulo, SP, Brazil}}

\vskip 5mm


\centerline{\bf ABSTRACT}

\vskip 2mm 

Gauge-invariant operators can be specified by
equivalence classes of permutations. 
We develop this idea concretely for the singlets of the flavour group $SO(N_f)$ in $U(N_c)$ gauge theory by using Gelfand pairs and Schur-Weyl duality.
The singlet operators, when specialised at $N_f =6$, belong to the scalar sector of ${\cal N}=4$ SYM. 
A simple formula is given for the two-point functions in the free field limit of $g_{YM}^2 =0$. 
The free two-point functions are shown to be equal to the partition function on a 2-complex with boundaries and a defect, in a topological field theory of permutations.
The permutation equivalence classes are Fourier transformed to a representation basis which is   orthogonal for the two-point functions at finite $N_c , N_f$. Counting formulae for the gauge-invariant operators are described. 
The one-loop mixing matrix is derived as a linear operator on the permutation equivalence classes.

\newpage

\tableofcontents

\setcounter{page}{0}
\setcounter{tocdepth}{2}
\setcounter{footnote}{0}

\section{Introduction}\label{sec:intro}

The AdS/CFT \cite{malda,gkp,witten} correspondence has led to detailed studies of local operators in gauge theories.
A remarkable success has been the discovery of integrability in planar $\cN=4$ SYM, which allows the computation of conformal dimensions and other quantities at any value of the 't Hooft coupling \cite{Beisert-Review}. 
In parallel, the study of AdS/CFT at finite $N_c$, the rank of the gauge group, is making progress. 
The CFT duals of giant gravitons \cite{mst} are local BPS operators, which have been investigated using techniques based on permutation groups and  Fourier transformation in representation theory \cite{CJR01,CR02,KR1,BHR07,BCD08,BDS08,BHR08,KR2}. A class of quarter BPS operators, showing 
finite $N_c$ cutoffs related to Brauer algebras,  were constructed  \cite{YK-quarter}.  The general  quarter BPS
for  the case where the dimension of the operators is less than $N_c$ were constructed using permutation 
algebras in \cite{BHR07,BHR08,Brown2010,CountConst}. 
Integrability of excitations around large half-BPS operators has been established \cite{1012.3884,CdML11,1108.2761}.  A natural direction of investigation is the application of the permutation-based methods to problems in the $1/N_c$ expansion and far from planarity. In this paper we will take a further step in this direction, studying singlet operators in the $SO(6)$ sector made from six scalars. The study of non-planar corrections in this sector was initiated in \cite{KS15}.

We will focus on the sector of hermitian scalar fields in $ {\mathcal{N}} = 4 $ 
SYM with $U(N_c)$ gauge group, and consider ``mesonic'' gauge-invariant operators.
The mesonic $SO(6)$-invariant operators form a simple closed subsector under the action of one-loop dilatation operator; namely the explicit form of the one-loop dilatation tells that there is no mixing between mesonic operators and other operators like non-scalar singlets with derivatives or ``baryonic'' operators.\footnote{The absence of mixing with baryonic operators is discussed around \eqref{def:baryonic} in Section \ref{sec: conclusion}.}
It is in fact convenient to replace $SO(6)$ by $SO(N_f)$ and discuss the general $N_f$ theory.
Various arguments simplify at large $N_f$ thanks to the absence of finite $N_f$ constraints.
We obtain identities relating different ways of counting gauge-invariant operators.

There is an essential distinction between $N_c \ge 2n$ and $N_c < 2n$, where $2n$ is the operator length.
For example the planar limit is a limit in the former regime.  
The latter regime contains interesting limits like $N_c \gg 1$, $n\sim O(N_c)$, which is related to the description of giant gravitons. 
Conventionally the former is called large $N_c$ and the latter is called finite $N_c$.
Likewise we mean large $N_f$ and finite $N_f$ by $N_f \ge 2n$ and $N_f < 2n$.

In section \ref{sec: operators} we show that operators  $ \cO_{ \alpha } $ of length $2n$ can be parametrised by permutations $ \alpha $  in $ S_{ 2n}$, the symmetric group of all permutations of $2n$ distinct objects. Different permutations giving rise to the same mesonic operator are related by conjugation with an element $\gamma $ in the wreath product subgroup $S_{n} [ S_2]  \subset S_{ 2n } $. This group, of dimension $2^n n!$, contains $n$ copies of $S_2$ as well as the symmetric group $S_n$ consisting of permutations of $n$ pairs. 
The two-point function of operators $ \langle \cO_{ \alpha_1} \cO_{ \alpha_2}  \rangle $ is expressed in terms of  a sum of permutations in $S_{2n}$. The all-orders expansion in $N_f $ and $N_c$ is given in terms of symmetric group data such as cycle types of appropriate permutations. 
This wreath product group is also used in \cite{CDD1301,CDD1303} to organise the colour structure in $SO(N_c)$ and $ Sp(N_c)$ 
 gauge theories.

In section \ref{sec: rep basis}, we show how to take a linear combination of the operators labelled by permutations to form a new basis of operators labelled by  representations of $S_{2n}$, along with a group-theoretical multiplicity label. 
This procedure can be thought of as the Fourier transform on finite groups, which replaces permutation labels by representation labels. The two-point function for these representation-labelled operators is diagonal, as given in (\ref{two-pt_rep_basis}). 
The diagonal operators become null when the constraints of finite $N_c$ or $N_f$ are violated. The construction in this section is an explicit realisation for the mesonic sector of a general construction in \cite{BHR08}.

In section \ref{sec: counting} we take a careful look at the counting of the mesonic operators constructed. 
We obtain the exact counting formula using representation theory, in particular Schur-Weyl duality. The representation labels are equipped with finite $N_c$ and finite $N_f$ cut-offs, which give the correct counting for general $n$. 
We return to the language of permutations, and analyze the large $N_c,N_f$ limits.
In this way, we make contact with counting of the graphs via the Burnside Lemma.

In section \ref{sec:TFT} we consider a topological lattice gauge theory with a discrete gauge group $S_{2n}$ of the type discussed in \cite{DW,FHK,CFS}. The counting of mesonic operators in the permutation basis, and their   free-field  two-point functions  can be interpreted as the partition functions of this topological field theory, defined on 2-complexes with boundaries and a defect.

Finally in section \ref{sec:one-loop}, we derive formulae for the action of 
the one-loop dilatation operator acting on the operators constructed above.

In appendices we will explain our notation, collect mathematical statements and details of computation.

\subsection*{Key technical results}\label{sec:guide}

Here is a brief summary of the key technical results of this  paper. 

Mesonic gauge-invariant operators, composite fields made of $2n$ scalars $ (\Phi_{a})^i_j$, are associated to 
permutations $ \alpha \in S_{ 2n }$. They are written as $ \cO_{ \alpha }$. Different permutations $ \alpha $ related through conjugation by a permutation $ \gamma \in S_n [ S_2 ] $ are identical. 
\bea 
\cO_{ \alpha } = \cO_{ \gamma \alpha \gamma^{-1}}   ~~ \hbox{for } ~~ \gamma \in S_n [ S_2 ] 
\eea
We denote the two-point functions of normal-ordered operators $\cO_\alpha$ of length $2n$ by
\begin{equation}
\vev{\cO_{\alpha_1}(x) \cO_{\alpha_2} (y)} = \frac{\vev{\cO_{\alpha_1} \cO_{\alpha_2}}}{\abs{x-y}^{4n}} \,,
\end{equation}
where the bracket $\vev{ \cO_{\alpha_1} \cO_{\alpha_2} }$ contains the sum of all Wick contractions between the two operators.
The two-point functions $\vev{ \cO_{\alpha_1} \cO_{\alpha_2} }$ in the free theory will be described by an elegant formula expressed in terms of cycle structures of permutations (\ref{2pt formula2}). However, $\vev{ \cO_{\alpha_1} \cO_{\alpha_2} }$ viewed as a function of the permutation equivalence classes, is not diagonal.

To find a basis of operators with diagonal two-point functions in the free theory, we use Fourier transformation on finite groups. Ordinary Fourier transformation can be written as 
\begin{equation}
f^K= \int d \theta  D^K (\theta) \, f(\theta), \qquad
D^K (\theta) = e^{iK\theta} 
\end{equation}
where $ D^K ( \theta )$ is the action of the $U(1)$ group element $e^{ i \theta }$ in the representation of 
charge $K$. 
Fourier transformation on finite groups is given by 
\begin{equation}
f^R_{ij}  = \sum_{g \in G} D_{ij}^R(g) f (g), \qquad
f(g) \in \bb{C}
\label{def:FTFG}
\end{equation}
where $D_{ij}^R(g)$ is the matrix element of $g$ in the representation $R$ and $i, j $ run over some basis 
of the representation, which we will choose to be an orthogonal basis. A similar transformation in the group algebra $\mC ( G )$, formed by linear combinations of group elements with complex coefficients, is 
\begin{equation}
 Q^R_{ij}  = \sum_{g \in G} D_{ij}^R(g) ~ g
\label{def:FTFG1}
\end{equation}
Specializing $G$ to the symmetric group, the irreducible representations $R$ will correspond to 
Young diagrams.

In this paper,  we find a new diagonal basis for the free-field two-point functions, labelled by two irreducible representations $R, \Lambda_1$ of the permutation group $S_{2n}$ and a multiplicity label $\tau$,
\begin{multline}
\cO^{R,\Lambda_1,\tau} = tr_{2n}(P^{R,\Lambda_1,\tau} 
~ \Phi_{a_1} \otimes \Phi_{a_1} \otimes \Phi_{ a_2} \otimes \Phi_{a_2} \otimes \cdots \otimes  \Phi_{a_n} \otimes \Phi_{ a_n})
\\
(\tau = 1,2, \dots, C(R,R,\Lambda_1))
\label{def:O rep basis}
\end{multline}
as in \eqref{def:OR,Lambda1,tau}. 
Here $R$ corresponds to a Young diagrams with $2n$ boxes, $\Lambda_1$ to an even Young diagram with $2n$ boxes, i.e. all the row lengths are even numbers, and $C(R,R,\Lambda_1)$ is the number of times $\Lambda_1$ appears in the irreducible decomposition of $ R \otimes R $. 
$P^{R,\Lambda_1,\tau}$ is a linear combination of the sums over the equivalence classes of permutations
\begin{align}
[\alpha]=\frac{1}{|S_{n}[S_2]|}\sum_{\gamma\in S_{n}[S_2]}
\gamma \alpha \gamma^{-1} , \qquad
|S_{n}[S_2]| = 2^n n!
\label{def:square bracket}
\end{align}
which live in the group algebra $\bb{C}[S_{2n}]$.
The equation \eqref{def:OR,Lambda1,tau} for $P^{R,\Lambda_1,\tau} $ is a generalisation of  \eqref{def:FTFG1}, and the equation \eqref{def:O rep basis} for $\cO^{R,\Lambda_1,\tau} $ is a generalisation of \eqref{def:FTFG}, where $f(g)$ has been replaced by gauge-invariant polynomials in $(\Phi_a)^i_j$ parametrised by permutations.
The coefficients in $P^{R,\Lambda_1,\tau}$ do not depend explicitly on $N_c$ or $N_f$\,, though $R, \Lambda_1$ are required to have a bound on the number of rows when  $N_f, N_c < 2n$.

The number of diagonal operators \eqref{def:O rep basis} is given by
\begin{align}
\hbox{ Number of mesonic operators } =
\sum_{
\substack{ R \\c_1(R)\le N_c} }
\sum_{
\substack{ \Lambda_1:even \\ c_1(\Lambda_1)\le N_f} }
C(R,R,\Lambda_1).
\label{number-of-ops-intro}
\end{align} 
The two-point functions of \eqref{def:O rep basis} are given by \eqref{two-pt_rep_basis},
\begin{align}
\left\langle \cO^{R,\Lambda_1,\tau} \cO^{R^{\prime},\Lambda_1^{\prime},\tau^{\prime}} \right\rangle
= \delta^{RR^{\prime}} \, \delta^{\Lambda_1\Lambda_1^{\prime}} \, \delta^{\tau\tau^{\prime}} \,
\left(\frac{(2n)!}{d_R}\right)^2 Dim(R) \, N_f^{n} \,
\omega_{\Lambda_1/2} (\Omega_{2n}^{(f)}) 
\label{diag 2pt intro}
\end{align}
and they vanish unless the corresponding representation labels on the two operators are identical. 
The normalisation factor $\omega_{\Lambda_1/2} (\Omega_{2n}^{(f)}) $ is a polynomial in $N_f$.

The wreath product group $S_n[S_2]$ in \eqref{def:square bracket} also appears as a symmetry of the Kronecker delta's used for the contraction of flavour indices.
The groups $S_{2n}$ and $S_n[S_2]$ form what is called a Gelfand pair $(S_{2n}, S_n[S_2])$.
A notable property of the Gelfand pair is that the reduction of an irreducible representation of $S_{2n}$ 
into the singlet representation of $S_n[S_2]$ is multiplicity-free.
This property plays an essential role in our construction of the diagonal operators (\ref{def:O rep basis}), which is in the trivial representation of $ S_n[S_2]$.

The free-field two-point functions and the number of operators \eqref{number-of-ops-intro} are closely related, which become manifest in the large $N_c, N_f$ limit. Indeed, this agreement opens up a novel interpretation as the partition function of a topological field theory with a discrete gauge group.

The representation basis highly constrains the one-loop mixing, shown in \eqref{rep-mix-sel-rule}. 
We then establish that the elements of the mixing matrix are non-zero for operators having a pair of representation labels 
$ R , R' $ related by the move of at most one box. This is a familiar fact from previous studies of one-loop mixing in representation bases.

\section{Singlet operators}\label{sec: operators}

\subsection{Mesonic operators and wreath-product permutation group }

We denote a hermitian scalar field of gauge theory by $\Phi = (\Phi_a)^i_j$ with $a=1,2, \dots , N_f$ and $i,j = 1,2, \dots N_c$\,. The $\cN=4$ SYM corresponds to $N_f=6$. 
The upper gauge indices are identified with the lower gauge indices, up to an ordering parametrised by a permutation $ \alpha $
\bea 
(\Phi_{a_1})^{i_1}_{ i_{ \alpha(1)}  } ( \Phi_{a_2} )^{ i_2}_{ i_{ \alpha(2)} } \cdots  
( \Phi_{ a_{2n}} )^{ i_{2n}}_{ i_{ \alpha(2n)}}
\eea
Regarding $( \Phi_a )^i_j$ as matrix elements of operators $ \Phi_a : V_N \rightarrow V_N$ and defining 
\bea 
\Phi_{ \vec a } = \Phi_{ a_1} \otimes \Phi_{ a_2}\otimes  \cdots \otimes  \Phi_{ a_{2n} } 
\eea
which are operators in $ V_N^{ \otimes 2n }$, 
we have 
\bea 
(\Phi_{a_1})^{i_1}_{ i_{ \alpha(1)}  } ( \Phi_{a_2} )^{ i_2}_{ i_{ \alpha(2)} } \cdots  
( \Phi_{ a_{2n}} )^{ i_{2n}}_{ i_{ \alpha(2n)}} 
= tr_{ V_N^{ \otimes 2n } } ( \alpha  ~ \Phi_{\vec a }   )  = tr_{2n} ( \alpha ~  \Phi_{ \vec a } ) 
\label{abbr trPhi}
\eea
with $ \alpha $ acting in the standard way on the  tensor product $V_N^{\otimes 2n }$. 
In the second equality, the trace has been  abbreviated as $ tr_{2n}$.

Mesonic operators are defined as gauge-invariant operators whose flavour indices are pairwise contracted.
The general mesonic operator can be written by a permutation $\alpha \in S_{2n}$,
\begin{align}
\cO_\alpha &= \Big( \prod_{k=1}^{n} \delta^{a_{2k-1} a_{2k}} \Big) \, tr_{2n} ( \alpha \, \Phi_{\vec a})
\notag \\
&= tr_{2n}  \( \alpha ~ \Phi_{a_1} \otimes \Phi_{a_1} \otimes \Phi_{ a_2} \otimes \Phi_{a_2} \otimes \cdots \otimes  \Phi_{a_n} \otimes \Phi_{ a_n} \).
\label{def:mesonic}
\end{align}
Generally this is a multi-trace operator, whose trace structure is given by the cycle type of $\alpha$.
If the cycle type of $ \alpha $ is $p$, 
\begin{equation}
p = [1^{p_1}, 2^{p_2}, \dots, (2n)^{p_{2n}} ], \qquad \sum_{i=1}^{2n} i \, p_i = 2n \,,
\label{def:p cycle-type}
\end{equation}
then the number of traces of $\cO_\alpha$ is equal to the number of cycles in $\alpha$,
\begin{equation}
C(\alpha) = \sum_i p_i \,.
\label{def:number of cycles}
\end{equation}

The flavour contractions appear in pairs $(2k-1,2k)$, which is invariant under the permutation $\Sigma_0 = ( 1,2) ( 3,4) \cdots ( 2n-1, 2n )$ consisting of pairwise swops.
The permutation $\Sigma_0$ is invariant when conjugated by another set of permutations $ \gamma $ belonging to a subgroup $ S_n [ S_2 ] $ of $ S_{2n}$ 
\bea 
\gamma \Sigma_0 \gamma^{-1} = \Sigma_0 ~~~~ \hbox{ for } ~~~~ \gamma \in S_n [S_2] .
\eea
This wreath product group $S_n [ S_2]$\footnote{$S_n [ S_2 ]$ is also called the hyperoctahedral group in mathematics.}  has order $2^n n!$
\bea 
| S_n[S_2] | = 2^n n! 
\eea
It contains each of the $n$ pairwise swops, which form a subgroup $(S_2)^{ \times n } \subset S_n [ S_2] $, along with $n!$ permutations of the $n$ pairs.
The operator $ \cO_{ \alpha } $ is invariant under conjugation by $S_n [ S_2 ]$,  
\bea
\cO_{ \alpha } = \cO_{ \gamma \alpha \gamma^{-1} } ~~~~ \hbox{ for } ~~~~\gamma \in S_n [ S_2] 
\label{wreath equiv class}
\eea
because conjugation by a permutation gives re-ordering of the flavour indices, \eqref{reordering}.
Recalling the definition of $[\alpha]$ in \eqref{def:square bracket}, we observe
\begin{align} 
[\alpha]=[\gamma \alpha \gamma^{-1}]=\gamma [\alpha] \gamma^{-1} ~~~~ \hbox{ for } ~~~~ \gamma \in S_n [S_2] . 
\end{align}
We define 
\begin{equation}
\cO_{ [\alpha] } \equiv \frac{1}{|S_{n}[S_2]|}\sum_{\gamma\in S_{n}[S_2]}
\cO_{\gamma \alpha \gamma^{-1} }
\end{equation}
and observe that 
\begin{align}
\cO_{ \alpha } = \cO_{ [\alpha] } 
\label{defOalpha}
\end{align}
It follows therefore that gauge-invariant mesonic operators are in 1-1 correspondence with the sums $[\alpha]$ in the group algebra.

For $N_c \ge 2n$ and $N_f \ge 2n$, the mesonic operators of the form \eqref{defOalpha} is 
uniquely and completely specified by 
the equivalence classes \eqref{wreath equiv class}. 
The latter number is given by the Burnside Lemma as
\bea 
\hbox{ Number of mesonic operators } = { 1 \over \abs{S_n[S_2]} } \sum_{ \gamma \in S_n[S_2] }
 \sum_{ \alpha \in S_{2n} } \delta_{2n} ( \gamma \alpha \gamma^{-1} \alpha^{-1} ) ,
\eea
where
\begin{equation}
\delta_{2n} (g) = \begin{cases}
1 &\qquad (g= {\rm identity} \in S_{2n}) \\
0 & \qquad ({\rm otherwise}).
\end{cases} 
\label{def:delta 2n}
\end{equation}
Consider $\cO_\alpha$ with fixed trace structure, where the cycle type of $ \alpha $ is $p$ in \eqref{def:p cycle-type}. The number of such gauge-invariants is
\bea 
{ 1 \over \abs{S_n[S_2]} } \sum_{ \gamma \in S_n[S_2] }
 \sum_{ \alpha \in T_p } \delta_{2n} ( \gamma \alpha \gamma^{-1} \alpha^{-1} ) ,
\label{number of MO fixed p} 
\eea
where $T_p$ consists of permutations of cycle type $p$.
This is also equal to
\bea 
\frac{1}{|H_p|} \sum_{\sigma\in H_p}\varphi(\sigma), \qquad
\varphi ( \sigma ) = \sum_{ \gamma \in [2^n] } \delta_{2n} ( \gamma \sigma \gamma^{-1} \sigma^{-1}  ) ,
\label{number of MO fixed p2} 
\eea
where the elements of $H_p$ commute with a fixed permutation of cycle type $p$.
A derivation of the equality of these formulae, 
along with counting at finite $N_f $ is in section \ref{sec:finite Nf}.

When $N_c < 2n$ or $N_f < 2n$, there exist a number of 
linearly dependent relations among operators. 
They are called  finite $N_c$ constraints or finite $N_f$ constraints. 
These constraints can be expressed in terms of 
Young diagrams.\footnote{
Here is a simple example of a finite $N_c$ constraint. 
For a $2\times 2$ matrix $X$, 
we have the identity
\begin{align}
tr(X^3)=\frac{3}{2} \, trX \,tr(X^2)-\frac{1}{2}(trX)^3. 
\end{align}
We can rewrite this identity in terms of the projection operator 
associated with the anti-symmetric representation as 
\begin{align}
tr_3 (p_{[1^3]} X^{\otimes 3})=0. 
\end{align}
In short, we cannot anti-symmetrise more than $N_c=2$ indices.
}  
In section \ref{sec: rep basis}, we will construct a set of operators 
with representation labels, where the finite $N_c$ and finite $N_f$ constraints are manifest.


\subsection{Two-point functions}

Consider the free two-point functions of the mesonic operators \eqref{def:mesonic}. Using the Wick contraction rule \eqref{UNc Wick}, we obtain
\begin{align}
\contraction[3ex]{\langle \cO_{\alpha_1} \cO_{\alpha_2} \rangle = \delta_{\vec a} \, \delta^{\vec b} \ \Big\langle \prod_{k=1}^{2n}  }{(\Phi_{a_k})}{^{i_k}_{i_{\alpha_1 (k)}} \prod_{m=1}^{2n} }{(\Phi_{b_m})}
\langle \cO_{\alpha_1} \cO_{\alpha_2} \rangle 
&= \delta_{\vec a} \, \delta^{\vec b} \ 
\Big\langle \prod_{k=1}^{2n}  (\Phi^{a_k})^{i_k}_{i_{\alpha_1(k)}} 
\prod_{m=1}^{2n} (\Phi_{b_m})^{j_m}_{j_{\alpha_2 (m)}} \Big\rangle
= \delta_{\vec a} \, \delta^{\vec b} \sum_{\sigma \in \cS_{2n} } \prod_{k=1}^{2n} \delta^{a_k}_{b_m} \delta^{i_k}_{j_{\alpha_2 (m)}} \delta_{i_{\alpha_1 (k)}}^{j_m} \Big|_{m = \sigma^{-1} (k)} 
\notag\\
&= \delta_{\vec a} \, \delta^{\vec b} \sum_{\sigma \in \cS_{2n} }
\prod_{k=1}^{2n} \delta^{a_k}_{b_{\sigma^{-1} (k)}} \delta^{i_k}_{i_{\alpha_1 \sigma \alpha_2 \sigma^{-1} (k)}} 
\label{mtr Wick formula}
\end{align}
where $\sigma$ represents all possible Wick contractions, and 
$\delta_{\vec a} = \prod_{i=1}^n \delta_{a_{2i-1} a_{2i}}$.
The results can be expressed by permutations,\footnote{$\tr_{2n} (\sigma) = \tr_{2n} (\sigma \, 1)$ is a special case of \eqref{abbr trPhi}}
\begin{equation}
\langle \cO_{\alpha_1}\cO_{\alpha_2}\rangle 
= \sum_{\sigma\in S_{2n}} W(\sigma^{-1}) \, tr_{2n} (\alpha_1 \sigma \alpha_2 \sigma^{-1})
= \sum_{\sigma\in S_{2n}} W(\sigma^{-1}) \, N_c^{C(\alpha_1 \sigma \alpha_2 \sigma^{-1})} \,.
\label{two-pt-formula}
\end{equation}
Here $C(\sigma)$ is \eqref{def:number of cycles}, and $W(\sigma)$ is the flavour factor
\begin{align}
W(\sigma) = \delta_{\vec a} \, \delta^{\vec b} \ \( \prod_{k=1}^{2n} \delta^{a_k}_{b_{\sigma(k)}} \)
\equiv \delta_{\vec a} \, \delta^{\vec b} \ ( \sigma )^{ a_1, a_2 , \cdots , a_{2n}}_{ b_1 , b_2 , \cdots , b_{ 2n} } \,.
\label{def:Wsigma}
\end{align} 
We have $W(1)=N_f^{n}$. 
The $\sigma$ in \eqref{def:Wsigma} acts on $ V_{F}^{ \otimes 2n } $, with matrix elements equal to Kronecker deltas.
We can also write $ W( \sigma)$ by introducing contraction operators 
$ C_{12}, C_{34} \cdots $, where $ C_{12}$ acts on the first and second 
tensor factors of $V_{F}^{ \otimes 2n }$ as 
\bea 
C_{12} \, e_{a_1} \otimes e_{a_2}  = \delta_{ a_1 a_2} \, e_b \otimes e_b
\label{def:contraction operator}
\eea
Then 
\bea 
W ( \sigma )  = tr_{ V_F^{ \otimes 2n}  } 
\left ( C_{12}C_{34}C_{56} \cdots C_{2n-1,2n} \sigma \right ).
\eea
The contractions form part of the Brauer algebra which is the commutant of $O(N_f)$ acting on $V_F^{ \otimes 2n  }$, the $2n$-fold tensor product of the fundamental representations.

\begin{figure}[t]
\begin{center}
\includegraphics[scale=0.7]{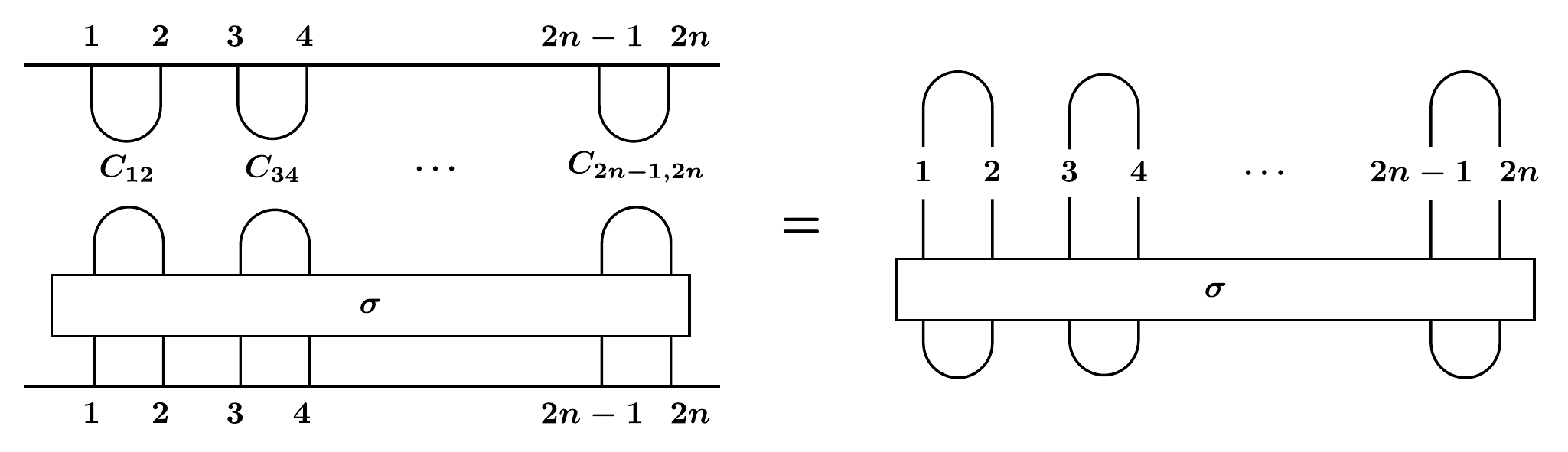}
\caption{The left figure shows $C_{12}C_{34} \cdots C_{2n-1,2n} \sigma$ acting on $V_F^{ \otimes 2n }$. The upper and lower horizontal lines are identified when taking the trace, as in the right figure.}
\label{fig:W(sigma)}
\end{center}
\end{figure}

\begin{figure}[t]
\begin{center}
\includegraphics[scale=0.75]{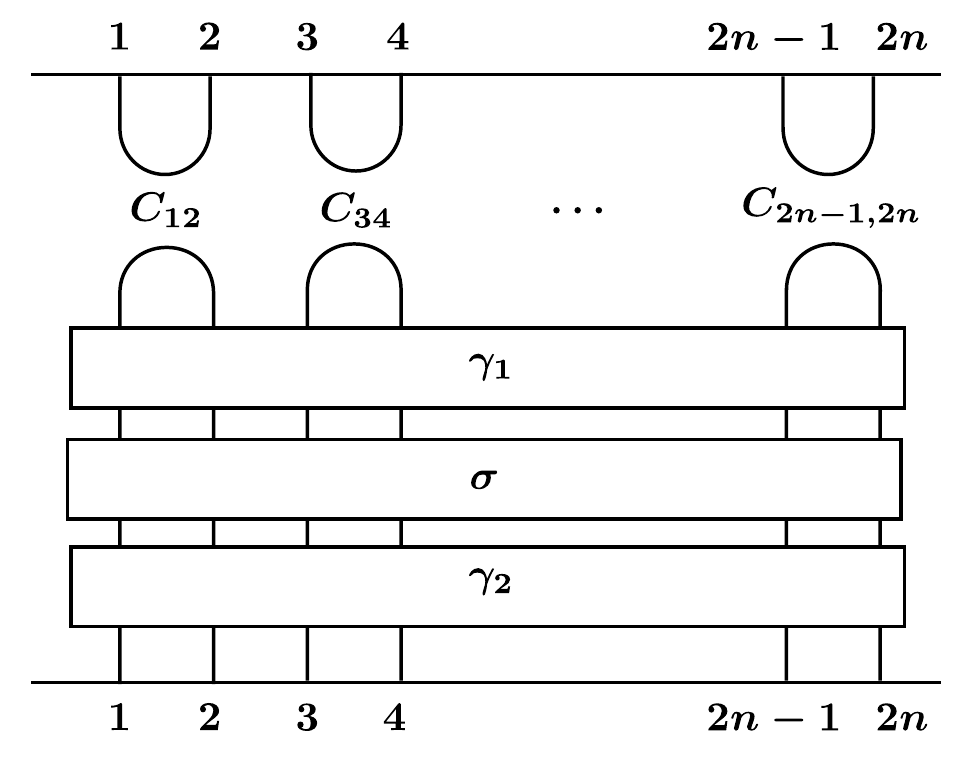}
\hspace{3mm}
\includegraphics[scale=0.75]{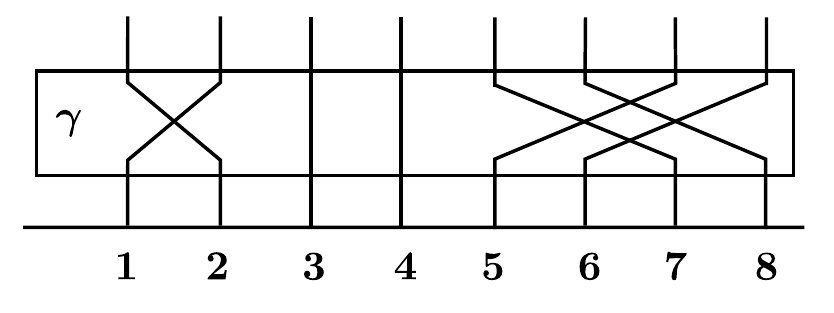}
\caption{(Left) Diagram for $W(\gamma_1 \sigma \gamma_2)$, which counts the number of connected components. (Right) An example of $\gamma \in S_n [S_2]$, showing that $S_n[S_2]$ does not change the number of connected components.}
\label{fig:Sn[S2]}
\end{center}
\end{figure}

The contraction operators obey 
\bea 
\gamma_1 ( C_{12} C_{34} \cdots C_{ 2n-1 , 2n } ) \gamma_2
=  ( C_{12} C_{34} \cdots C_{ 2n-1 , 2n } )
\eea
as can be seen from Figure \ref{fig:Sn[S2]}. It follows that 
\begin{align}
W(\sigma)=W(\gamma_1 \sigma \gamma_2), \qquad \forall \gamma_1,\gamma_2 \in S_n [S_2]
\label{equivalence_classes_W}
\end{align}
and from \eqref{def:Wsigma}
\begin{align}
W(\sigma)=W(\sigma^{-1}) .
\end{align}

Let us denote by $z(\sigma)$ the number of cycles $W(\sigma)$ represented in Figure \ref{fig:W(sigma)}.
Then $W(\sigma)$ is given by
\begin{equation}
W(\sigma)=N_f^{z(\sigma)}. 
\label{W_sigma_brauer}
\end{equation}
Indeed, the function $z(\sigma)$ satisfies
\begin{align}
z(\sigma)=z(\gamma_1 \sigma\gamma_2), \qquad \forall \gamma_1,\gamma_2 \in S_{n}[S_2],
\label{symmetry_z(sigma)}
\end{align}
We will show in Appendix \ref{sec:powers Nf} that it can be expressed as 
\begin{align}
z(\sigma)=\frac{1}{2} \, C (\Sigma_0\sigma\Sigma_0\sigma^{-1}).
\label{W_sigma_symmetric_group}
\end{align}
Since permutations can be multiplied efficiently using group theory software such as GAP or with Mathematica, 
this is a very useful expression for practical calculations. It does not involve explicitly doing sums 
over indices ranging from $1$ to $N_f$. 
The quantity (\ref{W_sigma_symmetric_group})  also has a nice mathematical meaning. 
It is equal to the number of cycles in the coset type of $\sigma$ as explained in Appendix \ref{app:Gelfand}.

\bigskip
For later purposes, let us define
\begin{align}
\Omega_{2 n}^{(f)}=\frac{1}{N_f^{n}}\sum_{\sigma \in S_{2n}}N_f^{z(\sigma)} \sigma^{-1} ,\qquad
\Omega_{2 n} =\frac{1}{N_c^{2n}}\sum_{\sigma \in S_{2n}}N_c^{C(\sigma)} \sigma^{-1} .
\label{def:Omega factors}
\end{align}
In terms of these quantities 
the two-point functions (\ref{two-pt-formula}) have the form of
\begin{align}
\langle \cO_{\alpha_1}\cO_{\alpha_2}\rangle 
&= \sum_{\sigma\in S_{2n} } N_f^{z (\sigma)} N_c^{C (\alpha_1 \sigma \alpha_2 \sigma^{-1})}
 ~ = ~ \sum_{\sigma\in S_{2n} } N_f^{{ 1 \over 2 } C ( \sigma \Sigma_0 \sigma^{-1} \Sigma_0 )  } N_c^{C (\alpha_1 \sigma \alpha_2 \sigma^{-1})} \notag \\
&= \sum_{\sigma}N_c^{2 n}N_f^{n} \, \delta_{2n} (\Omega_{2n}^{(f)}\sigma) \, 
\delta_{2n} (\Omega_{2n} \alpha_1 \sigma \alpha_2 \sigma^{-1}). 
\label{2pt formula2}
\end{align}
From (\ref{symmetry_z(sigma)}), the $\Omega_{2 n}^{(f)}$ satisfies
\begin{align}
\gamma_1 \Omega_{2n}^{(f)}\gamma_2=\Omega_{2n}^{(f)},
\end{align}
and we recover $W(\sigma)$ by
\begin{equation}
W(\sigma)=N_f^{n} \, \delta_{2n}(\Omega_{2 n}^{(f)}\sigma).
\label{W_sigma_Omega_delta}
\end{equation}


\section{Orthogonal two-point functions}\label{sec: rep basis}

We now propose a set of operators labelled by representations,
\begin{equation}
\begin{aligned}
\cO^{R,\Lambda_1,\tau} &= tr_{2n}(P^{R,\Lambda_1,\tau} ~ \Phi_{a_1} \otimes \Phi_{a_1} \otimes \Phi_{ a_2} \otimes \Phi_{a_2} \otimes \cdots \otimes  \Phi_{a_n} \otimes \Phi_{ a_n} ),
\\
P^{R,\Lambda_1,\tau} &= \sum_{\alpha\in S_{2n}}
B^{\Lambda_1}_k ~ 
\Clebsch{\tau}{\Lambda_1}{R}{R}{k}{i}{j} \, 
D^R_{ij} (\alpha^{-1}) ~ [\alpha],
\end{aligned}
\label{def:OR,Lambda1,tau}
\end{equation}
where $R$, $\Lambda_1$ are irreducible representations of $S_{2n}$, $\tau$ runs over $1, \dots, C(R,R,\Lambda_1)$ is the Clebsch-Gordan multiplicity for $ S_{2n}$ tensor products (see \eqref{def:CG number}), and the $\Clebsch{\tau}{\Lambda_1}{R}{R}{k}{i}{j}$ is the Clebsch-Gordan coefficient 
(\ref{def:CG coeff}). The use of this  group theory data in the construction of covariant bases was developed in \cite{BHR07,BHR08}. 
The $B^{\Lambda_1}_k$, called a branching coefficient, is defined in terms of 
the reduction of the irreducible  representation $ \Lambda_1$ of $ S_{2n}$ in terms of 
a direct sum of irreducible representations of $ S_n [ S_2] \subset S_{2n} $. A generic state $ | \Lambda_1 , k \rangle $ can be expanded in terms of irreducible representations of $S_n [ S_2]$. 
The one-dimensional irreducible representation of $ S_n [ S_2]$ is known  to appear in the decomposition of $ \Lambda_1$ with unit multiplicity if $ \Lambda_1$ is even, i.e. 
the partition of $\Lambda$ has the form 
$(2\lambda_1,2\lambda_2,\cdots)$ with integers $\lambda_i$.
Denoting this one-dimensional subspace of $ \Lambda_1 $ as  $| \Lambda_1 \rightarrow 1_{S_{n}[S_2]} \rangle $,  we have the branching coefficients
\begin{align}
B^{\Lambda_1}_k=\langle \Lambda_1,k \mid 
\Lambda_1 \rightarrow 1_{S_{n}[S_2]} \rangle,
\label{def:branching B}
\end{align}
where $k$ runs over $1,\cdots ,d_{\Lambda_1}$.\footnote{Due to Frobenius duality between restriction and induction, 
 the state $| \Lambda_1 \rightarrow 1_{S_{n}[S_2]} \rangle $ can also be thought as the  $\Lambda_1$-component of the induced representation ${\rm Ind}^G_H 1_H$ with $G=S_{2n}$ and $H=S_n[S_2]$.}
The branching coefficient is related to the singlet projector $p_{1_{S_n[S_2]}}$ by
\begin{equation}
D_{k_1k_2}^{\Lambda_1} (p_{1_{S_n[S_2]}}) 
= B^{\Lambda_1}_{k_1} B^{\Lambda_1}_{k_2} \,, \qquad
p_{1_{S_n[S_2]}} = \frac{1}{|S_n [S_2]|} \, \sum_{\gamma \in S_n[S_2]} \gamma \,.
\label{branching via projector}
\end{equation}

The quantity $P^{R,\Lambda_1,\tau}$ has the symmetry 
\begin{align}
P^{R,\Lambda_1,\tau} 
=\gamma P^{R,\Lambda_1,\tau}\gamma^{-1} 
\end{align}
for $\gamma \in S_n[S_2]$. From the inverse of \eqref{def:OR,Lambda1,tau}, $[\alpha]$ is expressed by 
\begin{align}
[\alpha]
=\frac{1}{(2n)!}\sum_{R,\Lambda_1,\tau }d_R 
B^{\Lambda_1}_k
\Clebsch{\tau}{\Lambda_1}{R}{R}{k}{i}{j} \,
D^R_{ji}(\alpha)P^{R,\Lambda_1,\tau} . 
\label{def:inverse transf}
\end{align}

Let us introduce 
\begin{align}
\omega_{\Lambda/2}(\sigma)=\frac{1}{|S_{n}[S_2]|}
\sum_{\gamma\in S_{n}[S_2]}
\chi_{\Lambda}(\sigma \gamma),
\label{zonal_S_n_wreath}
\end{align}
where $\Lambda$ is an even Young diagram \eqref{def:even rep}.
The function $\omega_{\Lambda/2}(\sigma)$ is called the zonal polynomial of 
the Gelfand pair $(S_{2n},S_{n}[S_2])$ \cite{MacdonaldBook}.
We have other expressions in terms of the projection operator 
associated with the singlet representation of $S_{n}[S_2]$,
\begin{align}
\omega_{\Lambda/2}(\sigma)
=
\chi_{\Lambda}(\sigma p_{1_{S_{n}[S_2]}})
=
\langle \Lambda \rightarrow 1_{S_{n}[S_2]}|
\sigma
|
\Lambda \rightarrow 1_{S_{n}[S_2]}\rangle. 
\label{zonal other}
\end{align}
The zonal polynomial has the property 
\begin{align}
\omega_{\Lambda/2}(\sigma)=\omega_{\Lambda/2}(\gamma_1 \sigma \gamma_2), 
\quad 
(\gamma_1,\gamma_2 \in S_{n}[S_2]).
\end{align}

The prominent property of the operators is that 
they have diagonal two-point functions  
\begin{align}
\langle \cO^{R,\Lambda_1,\tau}
O^{S,\Lambda_1^{\prime},\tau^{\prime}}
\rangle
=
\delta^{RS} \, \delta^{\Lambda_1\Lambda_1^{\prime}} \, \delta^{\tau\tau^{\prime}} \,
\left(\frac{(2n)!}{d_R}\right)^2 Dim(R) N_f^{n} \,
\omega_{\Lambda_1/2}(\Omega_{2n}^{(f)}),
\label{two-pt_rep_basis}
\end{align}
where $Dim(R)$ is the dimension of $U(N_c)$ associated with 
the representation $R$, and 
$d_R$ is the dimension of $S_{2n}$ associated with 
the representation $R$.
$\omega_{\Lambda_1/2} (\Omega_{2n}^{(f)})$ is a specialisation of the zonal spherical function of the Gelfand pair $(GL(N_f), O(N_f))$ explained in Appendix \ref{app:zonal}. It is a polynomial in $N_f$, which was also encountered in \cite{CDD1301,CDD1303}. This equation is derived in 
Appendix \ref{diagonal_twopoint_functions1}. 
Concrete examples of the operators $\cO^{R,\Lambda_1,\tau}$ are given in Appendix \ref{app:example}.

In section \ref{sec:counting_SW}
we count the number of mesonic operators using Schur-Weyl duality. 
It will turn out that the representation-labelled operator \eqref{def:OR,Lambda1,tau} should vanish if
\begin{equation}
c_1(R) < N_c \qquad {\rm or } \qquad c_1(\Lambda_1) < N_f \,,
\label{finite_Nc_finite_Nf}
\end{equation}
where $c_1(R)$ denotes the length of the first column of the Young diagram $R$.
This expectation is consistent with the normalisation of our two-point functions \eqref{two-pt_rep_basis}.
We can find the vanishing properties from the formulae 
\begin{equation}
Dim(R) = \prod_{(i,j) \in R} \frac{(N_c +j-i)}{h(i,j)}
\label{DimR formula}
\end{equation}
and 
\begin{equation}
\omega_{\Lambda_1/2} (\Omega_{2n}^{(f)}) 
= \frac{\abs{S_n[S_2]}}{N_f^n} \prod_{(i,j) \in \Lambda_1/2} (N_f + 2j-i-1),
\label{formula_character_omega_flavour}
\end{equation}
where $(i,j)$ labels the row and column of the boxes of the 
Young diagram, and $h(i,j)$ is the hook-length \eqref{def:hook-length}.
The formula (\ref{formula_character_omega_flavour}) was derived in \cite{MacdonaldBook,CDD1301}.
Both \eqref{DimR formula} and \eqref{formula_character_omega_flavour} depend on $N_c$ and $N_f$ through the product of factors, e.g.

\ytableausetup{mathmode, boxsize=2.5em, aligntableaux=center}
\begin{align}
Dim(R) &\sim \prod_{\rm boxes} \ 
\begin{ytableau}
\mbox{\scriptsize $N_c$} & \mbox{\scriptsize \ $N_c+1$ } & \mbox{\scriptsize \ $N_c+2$ } & \mbox{\scriptsize \ $N_c+3$ } \\
\mbox{\scriptsize \ $N_c-1$ } & \mbox{\scriptsize $N_c$ }
\end{ytableau} 
\\[2mm]
\omega_{\Lambda_1/2} (\Omega_{2n}^{(f)}) &\sim \prod_{\rm boxes} \ 
\begin{ytableau}
\mbox{\scriptsize $N_f$} & \mbox{\scriptsize \ $N_f+2$ } & \mbox{\scriptsize \ $N_f+4$ } & \mbox{\scriptsize \ $N_f+6$ } \\
\mbox{\scriptsize \ $N_f-1$ } & \mbox{\scriptsize $N_f+1$ }
\end{ytableau} 
\end{align}

\bigskip
In \cite{BHR08} a general formula of diagonal operators is presented, which works for any group $G$ and representation $V$. However, the formula requires the calculation of the CG coefficients arising from the decomposition of $ V_F^{ \otimes 2n}$ by $ G \times S_{2n}$\,.
It turns out that our operator $\cO^{R,\Lambda_1,\tau}$ is a concrete realisation of the general formulae of \cite{BHR08}. The relevant CG coefficients can be obtained explicitly 
as explained in Appendix \ref{sec:BHR}. The branching coefficients associated with the Gelfand pair $( S_{ 2n  } , S_{n}  [ S_2 ] )$ provide a  neat description of these Clebsch-Gordan coefficients, and hence of the space of $O(N_f)$ singlet operators of length $2n$.


\section{Operator counting}\label{sec: counting}

In this section we discuss several aspects of the counting formula, taking care of 
essential distinction between large $N_c$ and finite $N_c$ ( i.e. $N_c \ge 2n$ and $N_c < 2n$ ) 
as well as large $N_f$ and finite $N_f$ ( i.e. $N_f \ge 2n $ and $N_f < 2n$). 
First we employ Schur-Weyl duality and to obtain group theoretic expressions 
for  the dimension of the space of operators at finite $N_c, N_f$. Then we rewrite the counting formula in terms of delta functions of permutations in the large $N_c$ and  large $( N_c , N_f ) $ limits. The latter formulae are recognised as the counting formulae for graphs based on Burnside's Lemma. We will rediscover various ways of associating permutation labels to gauge-invariant operators at large $N_c, N_f$.

\subsection{Schur-Weyl duality and counting for finite $N_c , N_f$}
\label{sec:counting_SW}

The construction of our operators can be explained by Schur-Weyl duality, along the lines of \cite{BHR07,BHR08,Ramgoolam08,BrownThesis}. 
The scalar field $(\Phi^a)_i^j$ has one flavour and two colour indices, and belongs to $V_F \otimes V_C \otimes \ol{V}_C$\,. Thus the tensor product $\Phi^{\otimes 2n}$ belongs to $\( V_F \otimes V_C \otimes \ol{V}_C \)^{\otimes 2n}$. There is a natural action of permutations  $\sigma \in  S_{2n}$ on the tensor product,
\begin{equation}
\sigma \, : \, (\Phi^{a_1})_{i_1}^{j_1} (\Phi^{a_2})_{i_2}^{j_2} \dots (\Phi^{a_{2n}})_{i_{2n}}^{j_{2n}}
\ \mapsto \ 
(\Phi^{a_{\sigma(1)}})_{i_{\sigma(1)}}^{j_{\sigma(1)}} (\Phi^{a_{\sigma(2)}})_{i_{\sigma(2)}}^{j_{\sigma(2)}} \dots (\Phi^{a_{\sigma(2n)}})_{i_{\sigma(2n)}}^{j_{\sigma(2n)}} \,.
\label{S2n on tensor prod}
\end{equation}
The RHS is a polynomial of bosonic variables $(\Phi^a)_i^j$. Since they commute with each other, this polynomial is invariant under $\sigma$.

Let us count the number of gauge-invariant scalar operators. We regard $V_F, V_C, \ol{V}_C$ as $U(N_f), U(N_c) $-modules and apply Schur-Weyl (SW) duality. 
We find
\begin{multline}
\( V_F \otimes V_C \otimes \ol{V}_C \)^{\otimes 2n}
\\
= 
\mathop \bigoplus \limits_{\substack{\Lambda \vdash 2n \\ c_1 ( \Lambda_1 ) \le N_f}}
\mathop \bigoplus \limits_{\substack{R \vdash 2n \\ c_1 ( R )}}
\mathop \bigoplus \limits_{\substack{S  \vdash 2n \\ c_1 ( S ) \le N_c}}
\( V^{U (N_f)}_{\Lambda} \otimes V^{S_{2n}}_{\Lambda} \) \otimes
\( V^{U (N_c)}_{R} \otimes V^{S_{2n}}_{R} \) \otimes
\( \bar V^{U (N_c)}_{ S} \otimes  V^{S_{2n}}_{ S} \) .
\label{SWD tensor 2n}
\end{multline}
where $c_1(R)$ is defined in \eqref{def:c1Q}.
We impose the condition $R = S$ to select $U(N_c)$-invariant operators, 
\begin{equation}
V^{U(N_c)}_{R} \otimes \bar V^{U(N_c)}_{S} \Big|_{U(N_c)} = \delta_{R,  S} \, V^{U(N_c)}_{\emptyset}
\end{equation}
and the operators should also be $S_{2n}$-invariant
\begin{equation}
V^{S_{2n}}_{\Lambda} \otimes V^{S_{2n}}_{R} \otimes V^{S_{2n}}_{ R} \Big|_{S_{2n}}
= \bigoplus_{\Lambda'}     V^{S_{2n}}_{\Lambda} \otimes V_{ RR}^{\Lambda'} \otimes   V^{S_{2n}}_{\Lambda'} \Big|_{S_{2n}}
= V_{ RR}^{\Lambda} \otimes  V^{S_{2n}}_{\rep{2n}} \,.
\end{equation}
where $ V_{RR}^{ \Lambda'}$ is the multiplicity space of  $ V_{ \Lambda'}^{ S_{2n} } $ in the tensor product 
$ V_R^{S_{2n}}  \otimes V_R^{ S_{2n}} $.  Its dimension is the CG multiplicity $ C ( R , R , \Lambda' )$ 
\eqref{def:CG number}.
This implies 
\begin{equation}
\( V_F \otimes V_C \otimes \ol{V}_C \)^{\otimes 2n} \Big|_{U(N_c) \times S_{2n} }
= \mathop \bigoplus \limits_{\Lambda}
\mathop \bigoplus \limits_{R}
\( V^{U (N_f)}_{\Lambda} \otimes V^{\Lambda}_{RR}  \otimes V^{S_{2n}}_{\rep{2n}} \) .
\label{SWD general}
\end{equation}
The number of gauge-invariant scalar operators is given by taking the dimension of both sides.

Now we count the number of mesonic operators \eqref{def:mesonic}, which contains the Kronecker delta $\prod_{i=1}^n \delta^{a_{2i-1} a_{2i}}$. The mesonic operator is invariant under the following action of permutations $ \gamma$ in the  wreath product group $S_n[S_2]$,
\begin{multline}
 \gamma \, : \, \Bigl( \prod_{i=1}^n \delta^{a_{2i-1} a_{2i}} \Bigr) 
(\Phi^{a_1})_{i_1}^{j_1} (\Phi^{a_2})_{i_2}^{j_2} \dots (\Phi^{a_{2n}})_{i_{2n}}^{j_{2n}}
\\
\mapsto \ 
\Bigl( \prod_{i=1}^n \delta^{a_{\gamma(2i-1)} a_{\gamma(2i)}} \Bigr) 
(\Phi^{a_{\gamma(1)}})_{i_1}^{j_1} (\Phi^{a_{\gamma(2)}})_{i_2}^{j_2} \dots (\Phi^{a_{\gamma(2n)}})_{i_{2n}}^{j_{2n}} \,.
\label{SnS2 on tensor prod}
\end{multline}
This $\gamma$ is not part of $\sigma$ in \eqref{S2n on tensor prod}, because $\gamma$ does not change the colour indices.
Recall that the wreath product subgroup $ S_n [ S_2] $ is the set of permutations which leaves 
 $\Sigma_0 = (12)(34) \dots (2n-1,2n) \in S_{2n}$ invariant under action by conjugation
 \bea 
 \gamma \Sigma_0 \gamma^{-1} = \Sigma_0  ~~ \hbox{for} ~~  \gamma \in S_n [ S_2 ] .
 \eea
 The contraction of the flavour indices breaks $U(N_f)$ to $ O(N_f)$ and projects to the invariant representation of 
 $O(N_f)$.  So we can count the number of mesonic operators by restricting \eqref{SWD tensor 2n} to the  subspace invariant 
 under  $O(N_f) \times S_n[S_2] \times S_{2n} \times U( N_c) $. 
We use the fact that 
\begin{equation}
\begin{aligned}
V_F^{\otimes 2n} \big|_{O( N_f ) \times S_{n}[S_2]}
&= \bigoplus_{\Lambda} \ V_{\Lambda}^{U(N_f)}\big|_{O(N_f) }  \otimes  V_{\Lambda}^{S_{2n}} \big|_{S_{n}[S_2]} 
\\
&= \bigoplus_{\Lambda_1 : \, {\rm even}  } V_{\Lambda_1 }^{U (N_f)}\big|_{ O(N_f)}   \otimes  V_{\Lambda_1 }^{S_{2n}} \big|_{S_{n}[S_2]} .
\end{aligned}
\label{counting wreath-inv}
\end{equation}
Both $ (  GL( N_f) , O(N_f) ) $ and $ ( S_{2n } , S_n [ S_2 ] ) $ are Gelfand pairs \cite{MacdonaldBook}, which has the multiplicity-free property.
In particular $ \Lambda$ of the parent group contains the trivial of the subgroup 
with unit multiplicity if $ \Lambda_1$ is an even partition \eqref{def:even rep}.
Now we repeat the above argument, taking into account that the two factors in \eqref{counting wreath-inv} are one-dimensional,  to obtain
\begin{equation}
\( V_F \otimes V_C \otimes \ol{V}_C \)^{\otimes 2n} \Big|_{O(N_f) \times S_n [ S_2 ]  \times U(N_c) \times S_{2n}} = 
 \mathop \bigoplus \limits_{\Lambda_1 : \, { \rm even}  }
\mathop \bigoplus \limits_{R}  \, 
V_{RR}^{\Lambda_1 }   
\label{SWD rest}
\end{equation}
The number $I_{2n} (N_c, N_f)$ of mesonic singlet operators of length $2n$ at finite $N_c, N_f$ is thus 
\begin{equation}
I_{2n} (N_c, N_f) = \sum_{\substack{\Lambda_1 : \, {\rm even} \\ c_1 ( \Lambda_1 ) \le N_f}} 
\sum_{\substack{R \\ c_1 ( R ) \le N_c}} C(R, R, \Lambda_1) .
\label{SW counting}
\end{equation}
This result agrees with the number of diagonal operators \eqref{def:OR,Lambda1,tau}.
Table \ref{tab:I2n} shows some values of $I_{2n} (N_c, N_f)$ at large $N_c, N_f$.

The formula \eqref{SW counting} counts only the $O(N_f)$ singlets. 
The method to count the $SO(N_f)$ singlets or other representations are described in \cite{BHR08}. The difference between $O(N_f)$ and $SO(N_f)$ lies in the existence of ``baryonic'' operators. We will explain more about these points in Appendix \ref{sec:baryonic}.

\begin{table}[t]
\begin{center}
\begin{tabular}{c|cccccc}
$2n$ & 2 & 4 & 6 & 8 & 10 & 12 
\\\hline
$I_{2n}$ & 2 & 8 & 34 & 182 & 1300 & 12534
\end{tabular}
\caption{The number of all mesonic operators of length $2n$ for $N_c \ge 12$ and $N_f \ge 6$.}
\label{tab:I2n}
\end{center}
\end{table}


\subsection{Large $N_c$}\label{sec:finite Nc}

The number of singlet mesonic operators for finite $N_c$ and finite $N_f$ is given by \eqref{SW counting}.
We consider simplifications at large $N_c$; more precisely $ N_c \ge 2n $. 
There we can convert the sum over $R$ into a sum over permutations.

\bigskip
Let us define
\begin{align}
\varphi_{N_f}(\sigma):=
\sum_{
\substack{ \Lambda_1:even \\ c_1(\Lambda_1)\le N_f} }
\chi_{\Lambda_1}(\sigma) ,
\end{align}
and apply the formula (\ref{formula_two_characters_combine_delta}), valid for $N_c \ge 2 n$, to the counting formula \eqref{SW counting}.
It simplifies as
\begin{align}
I_{2n} (N_f) \equiv I_{2n} (N_c \ge 2n, N_f)
= \frac{1}{(2n)!}\sum_{\sigma\in S_{2n}}
\sum_{\gamma\in S_{2n}}\delta_{2n}
(\gamma \sigma \gamma^{-1} \sigma^{-1})\varphi_{N_f}(\sigma) 
\label{large_Nc_counting_finite_Nf} 
\end{align}
where $\delta_{2n} (g)$ is defined by \eqref{def:delta 2n}.
When the cycle type of $\sigma$ is $p=[1^{p_1},2^{p_2},\cdots,n^{p_n}]$ ($\sum_{i=1}ip_i=2n$), define
\begin{align}
T_p &= \{ \sigma \in S_{2n} \, | \, \text{cycle type of $\sigma$ is $p$} \}
\label{def:Tp}
\\[1mm]
H_p (\sigma) &= \{ \gamma \in S_{2n} \, | \, \gamma\sigma = \sigma \gamma , \ \sigma \in T_p \}.
\label{def:Hp}
\end{align}
Note that $H_p(\sigma), H_p(\sigma')$ are conjugate with each other if $\sigma, \sigma' \in T_p$. We define $H_p$ as $H_p (\sigma_*)$ for a fixed $\sigma_* \in T_p$\,. 
The order of $H_p$\,, namely the number of elements that commute with any permutation of cycle type $p$, is given by
\begin{align}
|H_p|=\frac{(2n)!}{|T_p|} \,.
\end{align}
See Table \ref{tab:pHp n=2} for examples. In particular, when the cycle type is $[2^n]$, the symmetry group (the stabiliser) is $H_{[2^n]} = S_n[S_2]$. Thus
\begin{align}
|S_n[S_2]|=\frac{(2n)!}{(2n-1)!!} = 2^n n!.
\label{wreath_product_2^n}
\end{align}

\begin{table}[t]
\begin{center}
\begin{tabular}{c|ccccc}
$p$ & $\rep{4}$ & $\rep{3,1}$ & $\rep{2^2}$ & $\rep{2,1^2}$ & $\rep{1^4}$  \\[1mm]
\hline
$H_p$ & $\bb{Z}_4$ & $\bb{Z}_3\times \bb{Z}_1$ & $S_2[\bb{Z}_2]$ & $\bb{Z}_2 \times S_2$ & $S_4$ 
\end{tabular}
\caption{The symmetry group which preserves the cycle type at $2n=4$.}
\label{tab:pHp n=2}
\end{center}
\end{table}

Then (\ref{large_Nc_counting_finite_Nf}) may be written as\footnote{Here we used the fact that any $\gamma\in T_p$ is written as $\gamma = \nu \gamma_* \nu^{-1}$ with a fixed $\gamma_* \in T_p$ for some $\nu \in S_{2n}$. This $\nu$ disappears after the redefinition $\sigma' = \nu^{-1} \sigma \nu$. The sum of $\varphi_{N_f}(\sigma)$ over $H_p (\gamma_*)$ does not depend on the choice of $\gamma_*$.\label{footnote:comp}}
\begin{align}
I_{2n} (N_f)
&=
\frac{1}{(2n)!}\sum_{\sigma\in S_{2n}}
\sum_{p \vdash 2n} \sum_{\gamma\in T_p}\delta_{2n}
(\gamma \sigma \gamma^{-1}\sigma^{-1})\varphi_{N_f}(\sigma)
\nonumber \\
&=
\frac{1}{(2n)!}\sum_p |T_p| \sum_{\sigma\in H_{p}}\varphi_{N_f}(\sigma)
\nonumber \\
&=
\sum_p \frac{1}{|H_p|} \sum_{\sigma\in H_{p}}\varphi_{N_f}(\sigma)
\label{finite_Nc_counting_1}
\end{align}
or 
\begin{align}
I_{2n} (N_f)
&=
\frac{1}{(2n)!}\sum_{q \vdash 2n} \sum_{\sigma\in T_{q}}
\sum_{\gamma\in S_{2n}}\delta_{2n}
(\gamma \sigma \gamma^{-1}\sigma^{-1})\varphi_{N_f}(\sigma)
\nonumber \\
&=
\frac{1}{(2n)!}\sum_q |H_q|\sum_{\sigma\in T_{q}}\varphi_{N_f}(\sigma)
\nonumber \\
&=
\sum_q \frac{1}{|T_q|} \sum_{\sigma\in T_{q}}\varphi_{N_f}(\sigma)
\label{finite_Nc_counting_2}
\end{align}
The two expressions give the large $N_c$ equivalence 
\begin{align}
I_{2n} (N_f)
=\sum_p \frac{1}{|H_p|} \sum_{\sigma\in H_{p}}\varphi_{N_f}(\sigma)
=
\sum_q \frac{1}{|T_q|} \sum_{\sigma\in T_{q}}\varphi_{N_f}(\sigma) .
\label{large Nc equivalence}
\end{align}
This equivalence can also be derived from 
the following group theory identity 
\begin{align}\label{pqtwosums} 
\sum_{p}\frac{1}{|H_p|}\sum_{\sigma\in H_{p}} 
\sum_{\mu\in S_{2n}}\mu \sigma \mu^{-1}
=\sum_q \frac{1}{|T_q|}
\sum_{\sigma \in T_q}
\sum_{\mu\in S_{2n}}\mu \sigma \mu^{-1}
\end{align}
The expressions \eqref{finite_Nc_counting_1} and \eqref{finite_Nc_counting_2} will be used in the next subsection.

\subsection{Large $N_c $ and $N_f$}\label{sec:finite Nf}

We take large $N_f$ and large $N_c$ limits to convert all representations to permutations, and obtain several expressions of the counting.

We define
\begin{equation}
\varphi(\sigma):=\varphi_{N_f \ge 2n}(\sigma) 
= \sum_{\Lambda_1:even} \chi_{\Lambda_1}(\sigma)
= \sum_{\Lambda_1 \vdash 2n} \chi_{\Lambda_1} (\sigma)M^{\Lambda_1}_{1_{S_{n}[S_2]}} .
\end{equation}
where we used the formula \eqref{mult-free SnS2} saying that $(S_{2n}, S_n[S_2])$ is a Gelfand pair. We also have
\begin{align}
\varphi(\sigma)
&=
\frac{1}{|S_n[S_2]|}\sum_{\Lambda_1}\sum_{u\in S_n[S_2]}
\chi_{\Lambda_1}(\sigma)
\chi_{\Lambda_1}(u)
\nonumber \\
&=
\frac{1}{|S_n[S_2]|}
\sum_{u\in S_n[S_2]}\sum_{\mu\in S_{2n}}\delta_{2n}(\mu \sigma \mu^{-1}u ) .
\label{derive_varphi_large_N_f_1}
\end{align}
Let us define
\begin{align}
Z_p \equiv \frac{1}{|H_p|}
\sum_{\sigma\in H_p}\varphi(\sigma),
\label{def:Zp varphi}
\end{align}
which is also written as
\begin{align}
Z_p =\sum_{\Lambda_1\vdash 2n}M^{\Lambda_1}_{1_{H_p}}
M^{\Lambda_1}_{1_{S_n[S_2]}}
=\sum_{\Lambda_1\vdash 2n,even}M^{\Lambda_1}_{1_{H_p}}
\end{align}
From (\ref{finite_Nc_counting_1}) the total number of singlets with $2n$ fields is
\begin{align}
I_{2n} \equiv I_{2n} (N_f \ge 2n)
=\sum_{p \vdash 2n} Z_p
\label{largeN_f_counting_ZHp}
\end{align}

We will derive three expressions for $Z_p$ below.\footnote{Each expression of $Z_p$ is related to different ways of writing the mesonic operators. Let us introduce
\begin{align}
\cO_{\alpha,\rho} =
\Bigl( \prod_{i=1}^n \delta^{a_{2i-1} a_{2i}} \Bigr) \, 
tr_{2n} \( \alpha \ \Phi_{a_{\rho(1)}}\otimes \Phi_{a_{\rho(2)}} \otimes  \cdots \otimes \Phi_{a_{\rho(2n-1)}}
\otimes \Phi_{a_{\rho(2n)}} \) .
\label{two_permutation_basis_alpha_rho}
\end{align}
This description is redundant in the following way,
\begin{equation}
\cO_{\alpha,\rho} = O_{\gamma_1 \alpha \gamma_1^{-1}, \gamma_2 \rho \gamma_1^{-1}} \qquad
\( \gamma_1 \in S_{2n} \,, \ \ \gamma_2 \in S_n[S_2] \).
\label{equivalence Oar}
\end{equation}
The permutation $\gamma_1$ comes from the re-ordering \eqref{reordering}, and $\gamma_2$ is the symmetry of the Kronecker delta's.
By using this redundancy we can gauge-fix either $\alpha$ or $\rho$. If we fix $\rho$, we obtain the operator $\cO_\alpha$ in \eqref{def:mesonic}. The corresponding counting formula is \eqref{counting_burnside_largeN_f}.
If we fix $\alpha$, then the cycle structure of $\alpha$, denoted by $p \vdash 2n$, determines the colour (or multi-trace) structure. It corresponds to the counting formula is \eqref{deltdc}. 
Finally, we use the contraction operators \eqref{def:contraction operator} and write
\begin{align}
\cO_{\alpha,\tau} =
\Bigl( \prod_{i=1}^n C_{\rho(2i-1) \rho(2i)} \Bigr) \, tr_{2n} \( \alpha \ \Phi_{a_1} \otimes \dots \otimes \Phi_{a_{2n}} \) 
\label{two_permutation_basis_alpha_tau}
\end{align}
We find that the transformation rule of the quantities $\prod_{i=1}^n C_{\rho(2i-1) \rho(2i)} = \rho^{-1} \( \prod_{i=1}^n C_{2i-1,2i} \) \rho$ and $\tau = \rho^{-1} \Sigma_0 \rho$ under the map $\rho \to \rho \gamma^{-1}$ are identical. The expression \eqref{two_permutation_basis_alpha_tau} is related to the last counting formula \eqref{counting_burnside_largeN_f-1}.
}

The quantity $Z_p$ is the number of equivalence classes in the double coset $S_{n}[S_2] \backslash S_{2n } / H_p$,
\begin{equation}
Z_p = \frac{1}{|S_n[S_2]|}\frac{1}{|H_p|}
\sum_{\sigma\in H_p}
\sum_{u\in S_n[S_2]}\sum_{\mu\in S_{2n}}\delta_{2n}(u \mu \sigma \mu^{-1} )
\label{deltdc} 
\end{equation}
The double coset space is the set of equivalence classes of permutations in $ S_{2n}$, generated by the left and right multiplications by the subgroups $ S_n [ S_2 ] $ and $H_p$ respectively. 
\bea 
\mu \sim u \mu \sigma ~~~~~ \( u \in S_n[S_2],   ~~ \sigma \in H_p \).
\eea
The above delta-function sum (\ref{deltdc}), counting the number of elements in the double coset space, 
is the application of the Burnside Lemma, which reduces the counting of orbits under a group action to 
the counting of fixed points under the group action. This double coset counting is the same as the counting 
of bi-partite graphs.
The bi-partite graphs have vertices in two colours (say black and white), and edges which connect only the vertices of different colours. The black vertices associated with partitions $p$, have cyclic order and there are $p_1$ univalent, $p_2$ bi-valent, $p_3 $ trivalent vertices, etc. These are easy to understand in terms of counting of traces of the scalar fields with global symmetry indices contracted. A cyclic black vertex of valency $v$ corresponds to a trace with $v$  scalar fields. The white vertices correspond to 
links between pairs of edges emanating from the black vertices, and correspond to flavour indices of the corresponding fields being contracted. 
The connection between double cosets and graph counting is explained in a physics context in \cite{Feynstring}. By going 
to large $N_c, N_f$, we see that counting $SO(N_f)$ invariants is simply counting the graphs.

Now observe that the last line 
in (\ref{derive_varphi_large_N_f_1})
can be rewritten as\footnote{
At the first line of (\ref{derive_varphi_large_N_f_2})
we have used that 
the elements in $S_{n}[S_2]$ satisfy 
\begin{align}
\Sigma_0 u \Sigma_0 ^{-1}=u
\end{align}
for $\Sigma_0=(12)(34)\cdots (2n-1,2n)$. 
See also the discussion in section 5.4 of \cite{Feynstring}.
} 
\begin{align}
\varphi(\sigma)=&
\frac{1}{|S_n[S_2]|}
\sum_{u\in S_{2n}}\sum_{\mu \in S_{2n}}
\delta_{2n} (\mu\sigma \mu^{-1}u)
\delta_{2n} (\Sigma_0 u \Sigma_0^{-1}u^{-1})
\nonumber \\
=&
\frac{1}{|S_n[S_2]|}
\sum_{\mu \in S_{2n}}
\delta_{2n} (\Sigma_0 
\mu\sigma \mu^{-1}
 \Sigma_0^{-1}
\mu\sigma^{-1} \mu^{-1}
)
\nonumber \\
=&
\sum_{\tau\in [2^{n}]}\delta_{2n} (\tau\sigma\tau^{-1}\sigma^{-1}).
\label{derive_varphi_large_N_f_2}
\end{align}
From \eqref{large_Nc_counting_finite_Nf}, \eqref{largeN_f_counting_ZHp} and \eqref{derive_varphi_large_N_f_2}, 
\begin{align}
Z_p &= \frac{1}{(2n)!} \sum_{ \sigma \in T_p } 
\sum_{\gamma\in S_{2n}}\delta_{2n}
(\gamma \sigma \gamma^{-1} \sigma^{-1})
\varphi( \gamma )
\nonumber\\
&= \frac{1}{(2n)!}
\sum_{\sigma\in T_p}
\sum_{\gamma\in S_{2n}}\delta_{2n}
(\gamma \sigma \gamma^{-1}\sigma^{-1})
\sum_{\tau \in [2^{n}]}\delta_{2n}(\tau\gamma\tau^{-1}\gamma^{-1}).
\label{counting_burnside_largeN_f-1}
\end{align}
This can be recognised as the counting of pairs $ ( \sigma , \tau ) $ in conjugacy classes $ ( T_p , [2^n] ) $, 
subject to equivalences $ ( \sigma , \tau ) \sim ( \gamma \sigma \gamma^{-1} , \gamma \tau \gamma^{-1} ) $ 
where $ \gamma \in S_{2n}$.  Such equivalence classes of pairs form another way of encoding bi-partite graphs. 
It amounts to choosing a
labelling of the edges using integers $ \{ 1, \cdots , 2n \}$ and reading off the labels of the edges around the black and white vertices. This is an alternative way to encode graphs, which differs from the encoding by a permutation $ \sigma \in S_{ 2n}$ which links directly with the 
counting by double cosets. This way of encoding graphs, in the context of Feynman graphs (which have symmetric rather than the cyclic vertices 
 here) is illustrated in Figure 7 of \cite{Feynstring}). The way that links directly with double cosets is shown in Figure 10 there.

Some further manipulation of \eqref{counting_burnside_largeN_f-1} gives
\begin{align}
Z_p &= \frac{1}{|S_n[S_2]|} \sum_{\sigma\in T_p} \sum_{\gamma\in S_n[S_2]}
\delta_{2n} (\gamma \sigma \gamma^{-1}\sigma^{-1} ).
\label{counting_burnside_largeN_f}
\end{align}
The equation \eqref{counting_burnside_largeN_f} establishes the equivalence between \eqref{number of MO fixed p} and \eqref{number of MO fixed p2}. We can reproduce this result also by applying the Burnside Lemma directly to the equivalence class of \eqref{equivalence Oar}.

Using (\ref{finite_Nc_counting_2}) we obtain yet another formula
\begin{align}
I_{2n} = \sum_q \tilde Z_q, \qquad
\tilde Z_q = \frac{1}{\abs{T_q}} \sum_{\gamma \in T_q} \sum_{\tau\in [2^{n}]}\delta_{2n} (\tau \gamma \tau^{-1} \gamma^{-1}).
\end{align}
Note however that $\tilde Z_q \neq Z_q$.\footnote{For example, when $q=\{ \rep{4}, \rep{3,1}, \rep{2^2}, \rep{2,1^2}, \rep{1^4} \}$, then $\tilde Z_q = \{1,0,3,1,3\}$ and $Z_q = \{2,1,2,2,1\}$.} 
We now have some physical insight into the two ways of writing $ I_{2n}$ as sums over partitions in (\ref{pqtwosums}). 
\eqref{large Nc equivalence}
In one way we have $ Z_p$. In another, we have $ \varphi ( \gamma )$ with $\gamma \in T_q$. The partition $p$ 
is the trace structure. The partition $q$ is the cycle structure of $\gamma$ which commutes with $ \tau \in [2^n ] $. 
Note that we have arrived at the sums over a product of two delta functions in this section 
by taking large $N_c, N_f $.

It is instructive to reconsider in reverse what we did in this section. 
Start from gauge-invariant operators parametrised by permutation equivalence classes.
Graph counting associated with gauge-invariant operators 
can be expressed in terms of permutation sums  with a product of delta functions. 
The Young diagrams $R , \Lambda $ come from Fourier transforming these two delta functions. 
In the present context, we have seen that the numbers of rows of $R , \Lambda$ are cut off by the rank of gauge and global symmetry groups.
In a wider context, we may wonder about the physical meaning of introducing extra integers to cut off the numbers of columns.


\section{Permutation topological field theory}\label{sec:TFT}

We explore the connection to a two-dimensional topological field theory (TFT) of permutation groups. 
This TFT is a topological lattice gauge theory whose gauge group is $S_{2n}$, defined on a 2-complex (collection of 0- , 1- and 2-cells glued together). 
The computation of observables of the TFT involves a sum over the group elements of $S_{2n}$ for every edge ($1$-cell), with a weight equal to a product of  delta functions, one for every face ($2$-cell). 
The delta function weight ensures that the sum 
is invariant under refinement of the cell decomposition, so that a continuum limit can be reached. This type of TFT is discussed in the physics literature in e.g. \cite{DW,FHK,CFS}. A review of the TFT of permutations and application to a large class of observables in quiver gauge theories is given in \cite{quivcalc}.

As a first step, we reconsider the number of mesonic states as a partition function in TFT. 
By summing over $p \vdash 2n$ in \eqref{counting_burnside_largeN_f-1}, we find
\begin{align}
I_{2n} 
&= \frac{1}{|S_n[S_2]|} \sum_{\sigma\in S_{2n}} \sum_{\gamma\in S_{2n}}
\delta_{2n} (\gamma \sigma \gamma^{-1}\sigma^{-1} ) \delta_{2n}(\Sigma_0 \gamma \Sigma_0 \gamma^{-1}) 
\label{TFT part fn1} 
\end{align}
This formula gives the number of mesonic singlets as a partition function for $S_{2n}$ TFT on the 2-complex
shown in the upper left of Figure \ref{fig:2PTFT2}. The 2-complex consists of two tori (drawn as cylinders with top 
and bottom boundaries identified) joined along  a circle, associated with permutation $ \sigma $. One of the tori has a cycle with permutation $ \Sigma_0$, which is fixed rather than being summed. This cycle with constrained permutation is a defect. 
Note that the delta function $\delta_{2n}(\Sigma_0 \gamma \Sigma_0 \gamma^{-1})$ can be solved explicitly as in \eqref{counting_burnside_largeN_f}.

\begin{figure}[t]
\begin{center}
\includegraphics[scale=0.7]{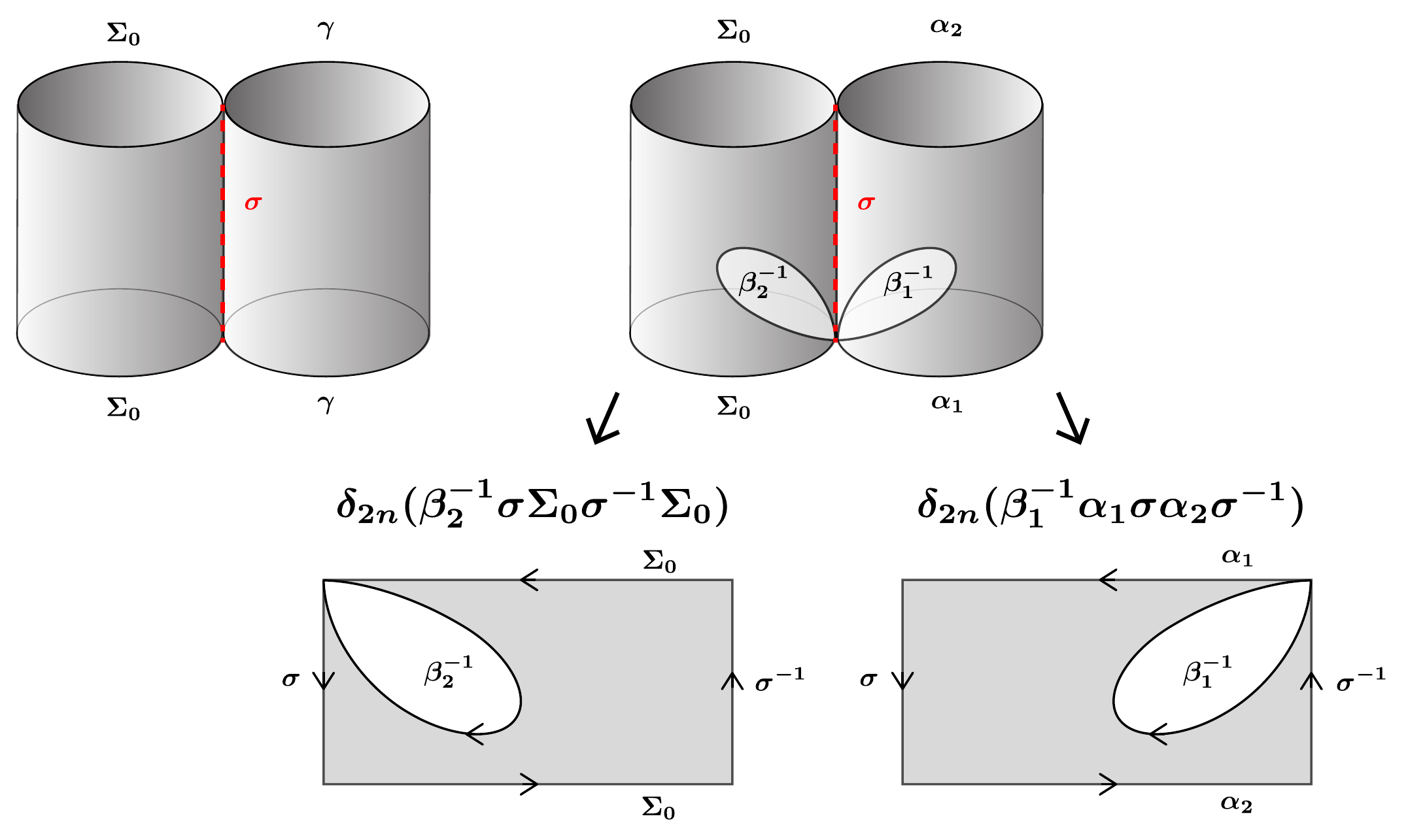}
\caption{Observables in TFT. The upper left figure is the number of states $I_{2n}$, where $\Sigma_0 , \gamma$ and the two ends of $\sigma$ are identified. The upper right figure is the two-point function $\langle \cO_{\alpha_1} \cO_{\alpha_2}\rangle$, where $\Sigma_0$ and the two ends of $\sigma$ are identified. The lower figures represent a pair of 2-cells in the two-point function.}
\label{fig:2PTFT2}
\end{center}
\end{figure}

Next we rewrite the free two-point functions \eqref{2pt formula2} to make contact with \eqref{TFT part fn1},
\begin{align}
\langle \cO_{\alpha_1} \cO_{\alpha_2}\rangle 
= \sum_{ \beta_1 \in S_{2n} } \sum_{ \beta_2 \in S_{2n} } 
 \sum_{\sigma \in S_{2n} }   N_c^{ C ( \beta_1) }N_f^{ { 1 \over 2 } {C ( \beta_2 )   } } 
\delta_{2n} ( \beta_1^{-1}  \alpha_1 \sigma \alpha_2 \sigma^{-1}) 
\delta_{2n} (\beta_2^{-1}   \sigma \Sigma_0 \sigma^{-1} \Sigma_0 ) 
\label{2pt TFT2}
\end{align} 
If we take the large $N_f, N_c $ limit, the leading terms only come from $\beta_1 = \beta_2 = 1$. 
If we set $[\alpha_1] = [\alpha_2^{-1}] = [\alpha]$ and sum over the conjugacy class $[\alpha]$, we reproduce the number of states \eqref{TFT part fn1} as
\begin{align}
\frac{1}{|S_n[S_2]|} \sum_\alpha \langle \cO_{\alpha} \cO_{\alpha}\rangle 
&= N_c^{2n} N_f^n \sum_{\alpha \in S_{2n}} \sum_{\sigma \in S_{2n} } 
\delta_{2n} (\alpha^{-1} \sigma \alpha \sigma^{-1}) 
\delta_{2n} (\sigma \Sigma_0 \sigma^{-1} \Sigma_0 ) 
\notag\\
&= N_c^{2n} N_f^n \, I_{2n} 
\end{align}
Note that the free two-point functions \eqref{2pt TFT2} become diagonal in the large $N_f, N_c $ limit owing to the symmetry \eqref{equivalence Oar}.

The formula (\ref{2pt TFT2}) can be recognised as the partition function of the $S_{2n}$ TFT on a 2-complex ${\cal M }$ with  boundaries and a defect, as we now describe.
The first delta function is associated to a 2-torus with the disc removed. The boundary of the disc has permutation  $\beta_2$, summed with  $N_f^{ C ( \beta_2 ) \over 2 }$ (this forms the $ \Omega_{N_f} $ factor).  One of the cycles of the 2-torus is constrained to be the permutation $ \Sigma_0$. This constraint can be viewed as a defect. 
The second delta function is associated to a topological quotient of a cylinder with a disc removed, related to $\beta_1$, 
which is summed with weight $N_c^{ C ( \beta_2 ) }$ to give $\Omega_{ N_c} $. The permutations $\alpha_1, \alpha_2$ correspond to the two boundary circles of the cylinder, and a point from each of the two boundary circles is identified by the quotienting. The 2-torus and the quotiented cylinder are glued along a circle, to form the $2$-complex which we call $\cM$. This is illustrated in Figure \ref{fig:2PTFT2}.

The 2-complex ${\cal M}$ cannot be the cell decomposition of a 2-manifold (which should locally be $\bb{R}^2$), because a 1-cell (the red-dashed circle denoted by $\sigma$) is incident on four 2-cells. 
Still, it should be possible to realise it as the 2-skeleton of a higher dimensional manifold. 
In that case  higher dimensional TFT would be the natural setting for the interpretation of the 2-point function. 
TFT3  has arisen in the context of refined  counting formulae for graphs in \cite{refcount}.


\section{One-loop operator mixing}\label{sec:one-loop}

In this section we will compute the mixing matrix 
under the action of the one-loop dilatation operator \cite{MZ02}
\begin{align}
H=&
-\frac{1}{2}tr[\Phi_m,\Phi_n][\check{\Phi}^m,\check{\Phi}^n]
-\frac{1}{4}tr[\Phi_m,\check{\Phi}^n][\Phi_m,\check{\Phi}^n],
\label{def:one-loop dil MZ}
\end{align} 
where 
$(\check{\Phi}^m)_{ij}(\Phi_n)_{kl}=\delta^m_n\delta_{jk}\delta_{il}$. 
On the representation basis, the mixing matrix is almost 
diagonal, where the non-zero components are explained by the repositioning of 
boxes. 

\quad 
 
On the permutation basis, the mixing matrix is given by 
\begin{align}
H \cO_{\sigma}=& \sum _{\rho}M_{\sigma,\rho}\cO_{\rho}
\\
M_{\sigma,\rho}
=&
- \sum_{\langle i,j\rangle}
\sum_{\beta \in S_{2n-1}^{\langle j\rangle}}
\delta_{2n}([\sigma,(ij)]X^{(j)}\beta^{-1})
\delta_{2n}([\rho^{-1},(ij)]\beta) \nonumber \\
&
-N_f \sum_{(i,j)}\delta_{2n}( \rho^{-1}(ij) \sigma )
-\sum_{\langle i,j\rangle}
\delta_{2n}\left(\rho^{-1} (\Sigma_0(i)j)(ij) \sigma (\Sigma_0(i)j)\right)
\nonumber \\
&
+ N_f \sum_{(i,j)}
\sum_{\beta \in S_{2n-1}^{\langle j \rangle}}
\delta_{2n}((ij)\sigma X^{(j)}\beta^{-1})
\delta_{2n}( \rho^{-1}(ij)\beta )
\nonumber \\
&
+
 \sum_{\langle i,j\rangle}
\sum_{\beta \in S_{2n-1}^{\langle j \rangle}}
\delta_{2n}((ij)\sigma X^{(j)}\beta^{-1})
\delta_{2n}(\rho^{-1}(\Sigma_0(i)j)(ij)\beta (\Sigma_0(i)j))
\label{mixing_matrix_mesonic_singlet}
\end{align} 
where $\Sigma_0=(12)(34)\cdots (2n-1,2n)$, 
and we have defined 
\begin{align}
X^{(j)}=N_c+\sum_{k(\neq j)} (kj).
\end{align} 
The sum $\sum_{(i,j)}$ is over 
$(1,2)$, $(3,4),\cdots$, while the sum $\sum_{\langle i,j\rangle}$ 
is over the other pairs. We do not distinguish $(i,j)$ and $(j,i)$.
The first line of \eqref{mixing_matrix_mesonic_singlet} comes from the first term of \eqref{def:one-loop dil MZ}, and the remaining lines come from the second term.
The derivation of this mixing matrix is presented in Appendix \ref{sec:mixing_detail}.

Given that we have expressed the mixing matrix purely in terms of permutations, 
we expect that there will be an interpretation in terms of permutation topological field theory. 
The construction of 2-complex will be analogous to our interpretation for the free field two-point function given in Figure \ref{fig:2PTFT2}, 
but will involve some new features given the additional complexity apparent here. We will return to this problem 
in the near future.

\quad

In the latter part of this section we study the mixing matrix on the 
representation basis. 
We denote the change of basis by 
\begin{align}
&P_{R,\Lambda_1,\tau}=\sum_{\alpha}c_{R,\Lambda_1,\tau}(\alpha)[\alpha], 
\nonumber \\
&[\alpha]
=\sum_{R,\Lambda_1,\tau}f_{R,\Lambda_1,\tau}(\alpha)P_{R,\Lambda_1,\tau}. 
\end{align} 
See \eqref{def:OR,Lambda1,tau} and \eqref{def:inverse transf} for the definition of $c_{R,\Lambda_1,\tau}$ and $f_{R,\Lambda_1,\tau}$.
The mixing matrix on the representation basis 
is related to the mixing matrix on the permutation basis by 
\begin{align}
M_{R,\Lambda_1,\tau}^{R^{\prime},\Lambda_1^{\prime},\tau^{\prime}}
=\sum_{\sigma,\rho\in S_{2n}}c_{R,\Lambda_1,\tau}(\sigma)M_{\sigma,\rho}
f_{R^{\prime},\Lambda_1^{\prime},\tau^{\prime}}(\rho).
\end{align} 
Let us take one term in the first line of \eqref{mixing_matrix_mesonic_singlet}, and simplify the mixing matrix on the representation basis,
\begin{gather}
\sum_{\sigma,\rho\in S_{2n}}c_{R,\Lambda_1,\tau}(\sigma)
\left(
\sum_{\beta\in S_{2n-1}}\delta_{2n}(\sigma U \beta)
\delta_{2n}(\rho V \beta^{-1})\right)
f_{R^{\prime},\Lambda_1^{\prime},\tau^{\prime}}(\rho)
\notag \\[1mm]
U = (ij) X^{(j)}, \ \ V = (ij).
\label{mixing_matrix_rep_to_native}
\end{gather}
Expand the Kronecker delta's using \eqref{delta_sigma}.
We now use the following formula to remove the sum over $\beta$ 
\begin{align}
\sum_{\beta\in S_{2n-1}}D^R_{ij}(\beta)D^{R'}_{kl}(\beta^{-1})
&=(2n-1)!\sum_{r,m,n}
\frac{1}{d_r}
B^{R,r}_{i,m }B^{R,r}_{j,n}
B^{R',r}_{k,n}B^{R',r}_{l,m}
\notag \\
&=(2n-1)!\sum_{r,m,n}
\frac{1}{d_r} \, I^{RR',r}_{il} I^{RR',r}_{jk}
\label{reduction_sum_formula_beta}
\end{align} 
Here we have introduced the branching coefficient
\begin{align}
B^{R,i}_{r,m }=\langle R,i |R\rightarrow r,m\rangle,
\end{align} 
where the $r$ is a Young diagram with $2n-1$ boxes, and 
$m$ runs over $1,\cdots ,d_r$.  
The quantity $I^{RR',r}_{il}$ is the intertwiner map 
\bea 
I^{RR',r}_{il} = \sum_m B^{R,r}_{i,m }B^{R',r}_{l,m}
\eea
of \cite{1012.3884,1108.2761}, here expressed in terms of branching coefficients. 
In the formula, 
the RHS is non-zero when both $g([1],r;R)$ and $g([1],r;R')$ are non-zero.\footnote{
$g([1],r;R)$ is the Littlewood-Richardson coefficient \eqref{def:LR coeff}.
} 
In other words, the RHS of (\ref{reduction_sum_formula_beta}) is non-zero 
only when the $R$ is obtained from the $S$ by moving a single box. 
Performing the sum over $\sigma,\rho$ in (\ref{mixing_matrix_rep_to_native}),
\begin{align}\label{rep-mix-sel-rule} 
&\sum_{\sigma,\rho}c_{R,\Lambda_1,\tau}(\sigma)
\left(
\sum_{\beta\in S_{2n-1}}
\frac{1}{((2n)!)^2}d_Ad_B
D^{A}_{ji}(\sigma U)D^A_{ij}(\beta)
D^B_{lk}(\rho V )D^B_{kl}(\beta^{-1})
\right)
f_{R^{\prime},\Lambda_1^{\prime},\tau^{\prime}}(\rho)
\nonumber \\
& = (2n)!
B^{\Lambda_1}_{k^{\prime}}
S^{\tau,} {}^{\Lambda_1}_{k'} {}^R_s {}^{R}_{j} 
D^R_{si}( U)
\left(\sum_{\beta}D^R_{ij}(\beta)D^{R^{\prime}}_{kl}(\beta^{-1})\right)
D^{R^{\prime}}_{qk}( V)
d_{R^{\prime}} 
B^{\Lambda_1^{\prime}}_{k^{\prime\prime}}
S^{\tau',} {}^{\Lambda'_1}_{k'} {}^{R'}_{l} {}^{R'}_{q} 
\end{align} 

This mixing matrix is non-zero if $R$ and $R^{\prime}$ become identical after the move of a single box. 
This kind of mixing, re-positioning of boxes, is common 
for representation bases, which has been studied concretely in \cite{0701067, 0710.5372, Brown08, BHR08,1012.3884, Kimura13}.
The diagonalisation of one-loop mixing is still not trivial, which has been achieved in the $SU(2)$ sector in special cases \cite{CdML11,1108.2761}. This will be an interesting avenue for future investigation.

\section{Conclusion and Discussion}\label{sec: conclusion}

We summarise the results of this paper. 
We studied mesonic operators, that is the $O(N_f)$-singlet scalar operators in $U(N_c)$ gauge theory, and computed free field two-point functions.
The two-point functions are expressed them in terms of permutations. 
We performed a Fourier transform from the permutation basis to the representation basis, which made the two-point functions diagonal.
Our mesonic operator provided a concrete realisation of the formula for diagonal operators \cite{BHR08}.
We counted the number of operators in both bases by applying inverse Fourier transform.
It was important to remove the redundancy of the wreath product group $S_n[S_2]$, noting that $(S_{2n}, S_n[S_2])$ is a Gelfand pair.
We computed the one-loop mixing matrix in both bases.

Our expression for the 2-point function on the permutation basis was used 
to give an interpretation in terms of TFT, based on topological lattice gauge theory of permutations equipped with defects. 
This extends the connection between TFT and quiver gauge theories \cite{quivcalc}. An interesting problem 
is to connect these results with  axiomatic TFT as discussed in \cite{MS,Turaev,SP2011,NC2016}.

An important point is that the R-symmetry group of $\cN=4$ SYM is $SO(6)$ rather than $O(6)$. 
The $SO(N_f)$-singlet operators include
\begin{equation}
\epsilon^{a_1\cdots a_{N_f}} \, tr_{N_f} ( \sigma \, \Phi_{\vec a} ), \quad
\delta^{a_1 a_2} \epsilon^{a_3\cdots a_{N_f+2}} \, tr_{N_f+2} ( \sigma \, \Phi_{\vec a} ), \quad \dots
\label{def:baryonic}
\end{equation}
which we call baryonic singlets.\footnote{The term ``baryon" refers to the flavour group, and not the colour group.} 
The mesonic operators are invariant under $O(N_f)$, while the baryonic operators are invariant under $SO(N_f)$ only.
The mesonic and baryonic operators are orthogonal at $ g_{YM}^2 = 0 $.  Since the Lagrangian of $\cN=4$ SYM does not contain the $\epsilon$ tensor of $SO(6)$, the two-point functions remain orthogonal to all orders of perturbation theory. A systematic generalisation of 
our study of correlation functions to baryonic operators is left for the future.

Our explicit construction of operators and free-field 2-point functions has been given for  $U(N_c)$ theories. 
Generalisations to other gauge groups, e.g. $SU(N_c)$ along the lines of 
\cite{CR02,dMG04,Brown07}, and $ SO(N_c)$ or $Sp( N_c)$ following \cite{CDD1301,CDD1303} will be interesting.

Our results provide a foundation for the systematic studies of planar zero modes (PZM) in the $SO(6)$ singlet sector. 
Group-theoretical counting methods for the PZM's can be developed, analogously to the construction of general mesonic singlets here. 
Often the counting formula implies the existence of a basis of PZM's labelled by permutations or representations.
An interesting application of such a formalism is to compute the non-planar anomalous dimension of the PZM's \cite{KS15}. 
The PZM's are expected to have negative anomalous dimensions by $1/N_c$ corrections, according to the results at strong coupling \cite{DMMR99,AFP00,Uruchurtu07,CEMZ14,Goncalves14} and those of conformal bootstrap \cite{DO01,ABL14,KSS15}. 
Following this paper, the general $N_f$ setup can provide a tractable approach to finite $N_f = 6$.

Another application of our results is to determine physical quantities of $\cN=4$ SYM at this level of generality, such as non-planar correlation functions \cite{DHR01,EHKS12}, partition functions \cite{SV04,GNS05,HO06,BDHO06}, and statistical properties of one-point functions via matrix product states \cite{BdLKZ15}.

The sector of  $SO(6)$ singlets offers a rich and  interesting setting to explore non-planar effects in a non-supersymmetric sector of $ \cN=4 $ SYM.

\subsubsection*{Acknowledgements}

The work of YK is supported by JSPS KAKENHI Grant Number 15K17673. 
SR is supported by STFC consolidated grant ST/L000415/1 ``String Theory, Gauge Theory
\& Duality''. RS is supported by FAPESP grants 2011/11973-4 and 2015/04030-7.
We thank Robert de Mello Koch for discussions.


\appendix 
\renewcommand{\theequation}
{\Alph{section}.\arabic{equation}}

\section{Notation and Formulae}\label{sec:formuae}
\setcounter{equation}{0}

\subsection{Notation}

We denote by $\Phi = (\Phi_a)^i_j$ a hermitian scalar field, $a=1,2, \dots , N_f$ and $i,j = 1,2, \dots N_c$. The flavour group is $SO(N_f)$ and the colour group is $U(N_c)$.
$\Phi$ belongs to the fundamental representation $V_F$ of $SO(N_f)$, and to the adjoint representation $V_C \otimes \ol{V}_C$ of $U(N_c)$. The case of $N_f=6$ describes the six scalars of $\cN=4$ SYM in four dimensions.

\bigskip
The $U(N_c)$ Wick-contraction rule is
\begin{equation}
\contraction{(}{\Phi_a}{)_i^j \, (}{\Phi_b}
(\Phi_a)_i^j \, (\Phi_b)_k^l = \delta_{ab} \, \delta^l_i \, \delta^j_k \,.
\label{UNc Wick}
\end{equation}
We introduce a gauge-covariant operator $\Phi_a$, which is related to the component fields by
\begin{alignat}{9}
\langle j_1 j_2 \dots j_{2n} \mid \Phi_{a_1} \otimes \Phi_{ a_2} \otimes \cdots \otimes  \Phi_{a_{2n}} 
\mid i_1 i_2 \dots i_{2n} \rangle =
(\Phi_{a_1})_{i_1}^{ j_{1}  } ( \Phi_{a_2} )_{ i_2}^{ j_{2} } 
\cdots ( \Phi_{ a_{2n}} )_{ i_{2n}}^{ j_{2n}}
\end{alignat}
Permutations act on the bases as 
\begin{alignat}{9}
\langle \vec j \, | \alpha &\equiv
\langle j_1 j_2 \dots j_{2n} | \alpha & &= \langle j_{\alpha^{-1}(1)} j_{\alpha^{-1}(2)} \dots j_{\alpha^{-1}(2n)} |
\\
\alpha | \vec i \,\rangle &\equiv
\alpha | i_1 i_2 \dots i_{2n} \rangle & &= | i_{\alpha(1)} i_{\alpha(2)} \dots i_{\alpha(2n)} \rangle
\end{alignat}
Thus, permutations act on SYM fields by 
\begin{equation}
\langle \vec j \mid \alpha \( \Phi_{a_1} \otimes \Phi_{ a_2} \otimes \cdots \otimes  \Phi_{a_{2n}} \) \beta 
\mid \vec i \, \rangle
= (\Phi_{a_1})_{i_{\beta(1)}}^{ j_{ \alpha^{-1}(1)}  } 
( \Phi_{a_2} )_{ i_{\beta(2)} }^{ j_{ \alpha^{-1}(2)} } 
\cdots 
( \Phi_{ a_{2n}} )_{ i_{\beta(2n)} }^{ j_{ \alpha^{-1}(2n)}} \,,
\end{equation}
and thus
\begin{equation}
\rho ( \Phi_{a_{1}}\otimes \Phi_{a_{2}} \otimes  \cdots \otimes \Phi_{a_{2n}}) \rho^{-1}
= \Phi_{a_{\rho(1)}}\otimes \Phi_{a_{\rho(2)}} \otimes  \cdots \otimes \Phi_{a_{\rho(2n)}} \,.
\label{reordering}
\end{equation}

\bigskip
The order of a finite group $G$ is denoted by $\abs{G}$. 
The words ``representations of $S_L$'', ``Young diagrams'', and ``partitions of $L$'' are used interchangeably.
A partition of $L$ is expressed in two ways. The first expression is 
\begin{equation}
Q = \rep{ q_1, q_2, \dots , q_\ell } \vdash L, \qquad 
{\tt q_1} \ge {\tt q_2} \ge \dots \ge {\tt q_\ell} \,, \qquad
\sum_{i = 1}^\ell {\tt q}_i = L .
\label{def:partition 1st}
\end{equation}
If we collect the same ${\tt q}$'s together, we obtain the second expression \eqref{def:p cycle-type}.
The symbol $c_1(Y)$ is the length of the first column of the Young diagram $Y$. 
In \eqref{def:partition 1st}, 
\begin{equation}
c_1 (Q) = \ell.
\label{def:c1Q}
\end{equation}
Clearly $c_1(Q) \ge c_2 (Q) \ge \dots \ge c_{\tt q_1} (Q)$.

\subsection{Formulae}

In this subsection the definition of group theory quantities, and 
group theory formulas are collected.

The matrix elements $(i,j)$ of the group element $\sigma \in S_L$ 
in the representation $R$ are denoted by 
$D^R_{ij}(\sigma)$. 
We assume all representations of $S_L$ are real and unitary, and thus $D^R_{ij}(\sigma) = D^R_{ji}(\sigma^{-1})$.
They satisfy the so-called grand orthogonality relation 
\begin{align}
\sum_{\sigma\in S_{L}}D^R_{ij}(\sigma)D^S_{kl}(\sigma^{-1})=
\frac{L!}{d_R}\delta_{il}\delta_{jk} \delta^{RS}, 
\label{orthogonality_representation}
\end{align}
where 
$d_R$ is the dimension of $R$ of symmetric group $S_L$, 
\begin{equation}
d_R=\frac{L!}{\prod_{i,j}h(i,j)}
\end{equation}
The product is over the boxes of the Young diagram $R$ with 
$i,j$ labelling the rows and columns. 
The quantity $h(i,j)$ is the hook-length associated with the box
at $(i,j)$, namely the number of boxes intersecting the hook which extends from $(i,j)$ toward the right and bottom. 
For example,
\begin{equation}
\Yvcentermath1
{\small \yng(4,2)} \quad\Rightarrow \quad
\small \young(5421,21) \ , \quad
h (1,1) = 5, \ h (1,2) = 4, \ h (1,3) = 2, \ h (1,4) = 1,\ \dots 
\label{def:hook-length}
\end{equation}

\quad 

The character of $\sigma$ in the representation $R$ is denoted by 
$\chi^R(\sigma)$. 
The characters satisfy the orthogonality
\begin{align}
\sum_{R\vdash L}\chi^R(\sigma)\chi^R(\rho)=
\sum_{\gamma\in S_{L}}\delta_{L}
(\gamma \sigma \gamma^{-1}\rho)
\label{formula_two_characters_combine_delta}
\end{align}
where the delta function is defined by 
\begin{eqnarray}\label{delta_sigma}
\delta_L (\sigma)=\frac{1}{L!}\sum_{R\vdash L}d_R \, \chi^R(\sigma)
=\left\{ \begin{array}{ll}
1 & (\sigma=1) \\
0 & (\sigma\neq 1) 
\end{array} \right.
\end{eqnarray}

\quad 

Consider the tensor space $V^{\otimes L}$, where $V$ is the fundamental representation of
the unitary group $U(N)$. The symmetric group acts on the tensor space 
by permuting the $L$ factors. 
From the fact that 
these two actions commute each other, 
the tensor space can be decomposed as 
\begin{align}
V^{\otimes L}=\mathop \bigoplus_R (V_R^{U(N)}\otimes V_R^{S_L})
\label{SW_dual_decomposition}
\end{align}
where the sum is over the Young diagrams with at most $N$ rows, 
which is expressed in terms of the length of the first column $c_1(R)$
by $c_1(R)\le N$.
This equation is the Schur-Weyl duality between $U(N)$ and $S_L$.

A trace over this tensor space is denoted by $tr_L$,
\begin{equation}
tr_L(\sigma) = N^{C(\sigma)} 
=N^L \delta_L(\Omega_L\sigma)
\end{equation}
where we have defined the quantity 
\begin{align}
\Omega_L=\sum_{\sigma\in S_L}\sigma N^{C(\sigma)-L}.
\end{align}
According to the Schur-Weyl duality, a trace of an 
element $\sigma \in S_L$ 
can be written as 
\begin{align}
tr_L(\sigma) 
=\sum_{R\vdash L,c_1(R)\le N} Dim(R) \, \chi^R(\sigma)
\label{Schur_Weyl_duality}
\end{align}
where $Dim(R)$ is the dimension of $R$ of Lie group $U(N)$
\begin{equation}
Dim(R) = \prod_{i,j}\frac{N-j+i}{h(i,j)}
\end{equation}
Setting $\sigma=1$ in (\ref{Schur_Weyl_duality}) gives the identity 
\begin{align}
N^L=\sum_{R\vdash L,c_1(R)\le N} Dim(R)d_R
\end{align}

\quad

Let $R_1 \,, R_2$ be the irreducible representations of $S_L$\,.
$\Clebsch{\tau}{\Lambda_1}{R_1}{R_2}{k}{i_1}{i_2}$ is the Clebsch-Gordan (CG) coefficients, defined by the irreducible decomposition of the tensor product $R_1 \otimes R_2 = \oplus S$, with the multiplicity label $\tau$, 
\begin{equation}
\ket{\Lambda,\tau,k} = \sum_{i_1, i_2 =1}^{d_R}  
\Clebsch{\tau}{\Lambda_1}{R_1}{R_2}{k}{i_1}{i_2} \, 
\ket{R_1, i_1} \otimes \ket{R_2, i_2} .
\label{def:CG coeff}
\end{equation}
The indices $(k,i_1,i_2)$ specifies the elements of $(R,R_1,R_2)$. 
The multiplicity label $\tau$ runs over $1,\cdots ,C(R,R,\Lambda)$, where $C(R,S,T)$ is called the CG number (also known as CG multiplicity or Kronecker coefficient)
\begin{equation}
C(R,S,T) = \frac{1}{|S_{L}|} 
\sum_{\sigma \in S_{L}} \, \chi^R (\sigma) \chi^{S} (\sigma) \chi^T (\sigma).
\label{def:CG number}
\end{equation}
The CG coefficients satisfy the following properties
\begin{align}
\sum_{\sigma\in S_L}
D^{\Lambda_1}_{ba}(\sigma)
D^R_{jk} (\sigma)
D^R_{li}( \sigma^{-1})
=\frac{L!}{d_{\Lambda_1}}\sum_{\tau}
\Clebsch{\tau}{\Lambda_1}{R}{R}{b}{i}{j} \, 
\Clebsch{\tau}{\Lambda_1}{R}{R}{a}{l}{k}
\label{BHR_163}
\end{align}
and 
\begin{align}
\sum_{ij}
\Clebsch{\tau}{\Lambda_1}{R}{R'}{a}{i}{j} \,
\Clebsch{\tau'}{\Lambda'_1}{R}{R'}{b}{i}{j}
= \delta^{\Lambda_1 \Lambda_1^{\prime}} \, \delta^{\tau \tau^{\prime}} \delta_{ab} \,, \qquad
\sum_{\tau, \Lambda_1, a} 
\Clebsch{\tau}{\Lambda_1}{R}{R'}{a}{i}{j} \,
\Clebsch{\tau}{\Lambda_1}{R}{R'}{a}{k}{l}= \delta_{ik} \, \delta_{jl}
\label{BHR_160}
\end{align}

\quad 

Let $R_1 , R_2 , R$ be the irreducible representations of $S_m , S_n , S_{m+n}$, respectively.
The Littlewood-Richardson coefficient $g(R_1,R_2;R)$ counts the number of $R_1\otimes R_2$ appearing in the decomposition of $R$ under $S_m\times S_n$, 
\begin{equation}
g(R_1,R_2;R)=
\frac{1}{m! \, n!}
\sum_{\sigma_1\in S_m,\sigma_2 \in S_n}
\chi_{R_1}(\sigma_1^{-1})\chi_{R_2}(\sigma_2^{-1})\chi_{R}(\sigma_1 \circ \sigma_2).
\label{def:LR coeff}
\end{equation}

\subsection{Gelfand pair and coset type}\label{app:Gelfand}

A pair of finite groups $(G,H)$ with $G \supset H$ is called Gelfand pair if they satisfy either of the following conditions:
\begin{itemize}
\item[(i)]
Any irreducible representation of $G$ contains at most one singlet representation of $H$. 
\item[(ii)]
Consider a set of functions on the double coset $H \backslash G/H$,
\begin{equation}
\cC (G,H) := \Bigl\{ w : G \to G \,\Big|\, w (g) = w ( \gamma_1 g \gamma_2), \ \ \ (g \in G, \ \gamma_1, \gamma_2 \in H) \Bigr\}
\label{def:CGH}
\end{equation}
where the multiplication is defined by convolution. 
The algebra $\cC(G,H)$ is commutative.
\end{itemize}
The two conditions are equivalent \cite{MacdonaldBook}. The Gelfand pair for Lie groups is defined similarly.

An important example of the Gelfand pair is $(S_{2n}, S_n[S_2])$. 
From (i), an irreducible representation $R$ of $S_{2n}$ satisfies
\begin{align}
M^{R}_{1_{S_{n}[S_2]}}
:=
\frac{1}{|S_n[S_2]|}
\sum_{\gamma\in {S_{n}[S_2]}
}\chi^{R}(\gamma)
=\left\{ \begin{array}{ll}
1 & \ \ (\mbox{$R$ is even}) \\
0 & \ \ (\mbox{$R$ is odd}) 
\end{array} \right. .
\label{mult-free SnS2}
\end{align}
An even representation $R$ corresponds to even Young diagram
\begin{equation}
R = \rep{2r_1,2r_2,\cdots} \vdash 2n, \qquad
{\tt r_i} \in \bb{Z}_{>0}
\label{def:even rep}
\end{equation}
in \eqref{def:partition 1st}.

The double coset function in the condition (ii) corresponds to $W(\sigma)$ defined in \eqref{def:Wsigma}, which satisfies the equivalence relation,
\begin{equation}
W(\sigma)=W(\gamma_1 \sigma \gamma_2) , \quad 
(\sigma\in S_{2n}, \ \ \gamma_1,\gamma_2\in S_n[S_2]).
\end{equation}
The function $W(\sigma)$ defines the unique {\it coset type} for 
each permutation $\sigma$, 
\begin{equation}
\tilde p = [2^{\tilde p_1}, 4^{\tilde p_2}, \dots, (2n)^{\tilde p_{n}}], \qquad \sum_{i=1}^{n} i \, \tilde p_i = n \,.
\label{def:tp coset-type}
\end{equation}
where 
$\tilde{p}_l$ in \eqref{def:tp coset-type} is the number of length-$2l$ loops in $W(\sigma)$ shown in Figure \ref{fig:W(sigma)2}.\footnote{A loop $W(1)$ at $n=2$ is defined to have length 2.}
The number of all loops is equal to the power of $N_f$ in 
(\ref{W_sigma_brauer}), 
\begin{equation}
z(\sigma)=\sum_{l} \tilde p_l.
\end{equation}
Two permutations $\sigma_1$, $\sigma_2 \in S_{2n}$ have the same coset type
if and only if they are related by  
$\sigma_1= \gamma_1 \sigma_2\gamma_2$ for $\gamma_1,\gamma_2\in S_n[S_2]$.


\section{Powers of $N_f$}\label{sec:powers Nf}
\setcounter{equation}{0}

We denote the number of loops in the graph of $W(\sigma)$ by $z(\sigma)$.
We will show the identity
\begin{equation}
z(\sigma) =\frac{1}{2}C(\Sigma_0\sigma^{-1} \Sigma_0\sigma) ,
\label{Wsig identity}
\end{equation}
where $\Sigma_ 0 = (12)(34) \dots (2n-1,2n )$ and $C(\sigma)$ counts the number of cycles in $\sigma \in S_{2n}$\,.

\begin{figure}
\begin{center}
\includegraphics[scale=0.7]{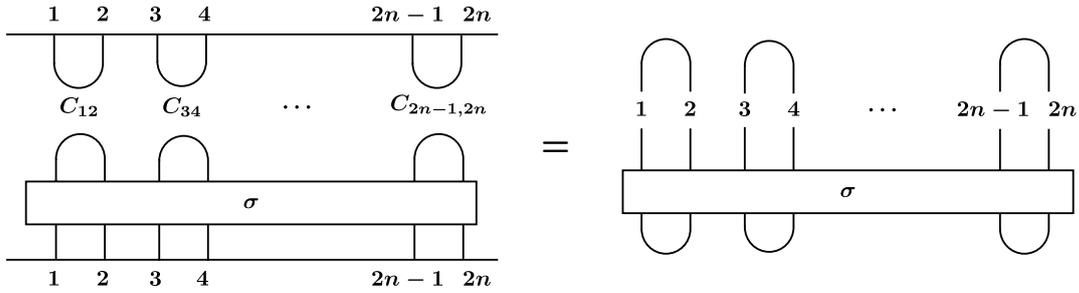}
\caption{This is identical to Figure \ref{fig:W(sigma)}. The left figure shows $C_{12}C_{34} \cdots C_{2n-1,2n} \sigma$ acting on $V_F^{ \otimes 2n }$. The upper and lower horizontal lines are identified when taking the trace, as in the right figure.}
\label{fig:W(sigma)2}
\end{center}
\end{figure}

The structure of $W(\sigma)$ is depicted in Figure \ref{fig:W(sigma)2}.
Consider the labelled points $\{ 1 , 2, \cdots , 2n \}$ in that diagram. If these points move along the lines toward the upper arcs, they will return to the labelled points $\{ 1 , 2, \cdots , 2n \}$ after a permutation 
\bea 
&& \Sigma_0 = (12) (34) \cdots (2n-1~  2n ) 
\eea
If the points went down through $ \sigma $, the lower arcs, and back up $ \sigma$ again, they will undergo the permutation 
\bea 
\Sigma_1 \equiv \sigma^{-1} \Sigma_0 \sigma 
= (\sigma(1) \sigma(2)) (\sigma(3) \sigma(4)) \cdots (\sigma(2n-1) \sigma(2n)).
\eea

The graph of $W(\sigma)$ contains flavour loops. A flavour loop is a sequence of transitions of the form
\begin{equation}
\Sigma_0 \Sigma_1 \,, \quad
\Sigma_0 \Sigma_1 \Sigma_0 \Sigma_1 \,, \quad
\Sigma_0 \Sigma_1 \Sigma_0 \Sigma_1 \Sigma_0 \Sigma_1 \,, \quad
\cdots 
\label{sequences S0S1}
\end{equation}
or their inverses. Note that $\Sigma_0^2 = \Sigma_1^2 =1$. We cannot return to the original point by after an odd number of $\Sigma$'s, because both $\Sigma_0$ and $\Sigma_1$ have cycle type $[2^n]$. In other words, $\Sigma_0$ and $\Sigma_1$ acting on $i$ behaves as a permutation of odd signature for any $i$.
The number of flavour loops, i.e. the power of $N_f$ denoted by $z ( \sigma )$, is same as the number of orbits in the subgroup of $S_{2n}$ generated by $\langle \Sigma_0 , \Sigma_1 \rangle$. 
This is also the number of connected components in the ribbon graph determined by the permutations  $ \Sigma_0 , \Sigma_1$.

\bigskip
Let us introduce the notation
\begin{equation}
2i-1 = i^-, \qquad 2i = i^+, \qquad (i=1,2, \dots, n ) .
\end{equation}
Recall that $S_{n }[S_2]$ is the stabiliser of $\Sigma_0$\,,
\begin{equation}
\xi^{-1} \Sigma_0 \, \xi = \Sigma_0 \,, \qquad \forall \xi \in S_{n}[S_2].
\end{equation}
By using this ``gauge degree of freedom'' of $S_n[S_2]$, we may transform $\Sigma_1$ to $\Sigma'_1 = \xi^{-1} \Sigma_1 \, \xi$ without changing $C(\Sigma_0 \Sigma_1)$. There exists a useful gauge:

\begin{lem}\label{lem:can S1}
By a gauge transformation in $S_n[S_2]$, we can transform $\Sigma_1$ to the form
\begin{equation}
\Sigma'_1 = (1^- \, \tau(1)^+) \dots ( n^- \, \tau(n )^+), \qquad
\tau \in S_n \,.
\label{Sigma1 canonical}
\end{equation}
\end{lem}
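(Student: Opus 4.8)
The plan is to think of $\Sigma_0$ and $\Sigma_1$ as two perfect matchings on the vertex set $\{1^-,1^+,\dots,n^-,n^+\}$ (equivalently $\{1,2,\dots,2n\}$), and to exploit the freedom to conjugate $\Sigma_1$ by the stabiliser $S_n[S_2]$ of $\Sigma_0$. The matching $\Sigma_0$ pairs $i^-$ with $i^+$ for each $i$; the $n$ copies of $S_2$ inside $S_n[S_2]$ let us swap the two endpoints $i^-\leftrightarrow i^+$ of any such pair independently, and the $S_n$ factor lets us relabel the $n$ pairs among themselves. I want to use these moves to bring $\Sigma_1$ into the normal form \eqref{Sigma1 canonical}, in which every $\Sigma_1$-edge goes from a ``minus'' vertex to a ``plus'' vertex, and moreover from $i^-$ to exactly one plus-vertex, so that the assignment $i\mapsto\tau(i)$ is a well-defined permutation in $S_n$.

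\textbf{Key steps.} First I would set up the bipartite-like bookkeeping: colour $i^-$ black and $i^+$ white, and observe that each edge of $\Sigma_0$ is black--white by construction, whereas an edge of $\Sigma_1$ can a priori be black--black, white--white, or black--white. The goal of the first reduction is to use the $(S_2)^{\times n}$ part to make every $\Sigma_1$-edge black--white. Consider the graph on the $n$ pairs whose edges record which pairs are linked by a $\Sigma_1$-edge; since $\Sigma_1$ is itself a perfect matching, I would argue that one can orient/flip the pairs consistently — concretely, process the connected components of the union $\Sigma_0\cup\Sigma_1$ (which are single cycles alternating between $\Sigma_0$- and $\Sigma_1$-edges) one at a time, walking around each cycle and applying the swap $i^-\leftrightarrow i^+$ whenever needed so that, after the flip, each $\Sigma_1$-edge meets one black and one white endpoint. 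This is the standard observation that a 2-regular graph formed by two matchings is a disjoint union of even cycles, and on each such cycle the endpoints can be 2-coloured to agree with $\Sigma_0$'s colouring up to the allowed flips. Second, once every $\Sigma_1$-edge is black--white, each black vertex $i^-$ is matched by $\Sigma_1$ to a unique white vertex, say $\tau(i)^+$; since $\Sigma_1$ is a bijection the map $i\mapsto\tau(i)$ is a permutation in $S_n$, and $\Sigma_1$ has exactly the form \eqref{Sigma1 canonical}. Third, I would note (for completeness, though not strictly required by the statement) that the residual $S_n$ freedom acts by $\tau\mapsto\pi\tau\pi^{-1}$, so only the conjugacy class of $\tau$ — equivalently the coset type — is an invariant, consistent with the discussion of coset types in Appendix~\ref{app:Gelfand}.

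\textbf{Main obstacle.} The delicate point is the consistency of the flips in the first step: flipping $i^-\leftrightarrow i^+$ toggles the colour at \emph{both} endpoints of the $\Sigma_0$-edge at $i$, so one cannot flip pairs independently without tracking interactions through the $\Sigma_1$-edges. The clean way to handle this is precisely to pass to the cycle decomposition of $\Sigma_0\cup\Sigma_1$: within one alternating cycle $i_1^{\epsilon_1}\!-\!i_1^{-\epsilon_1}\!-\!i_2^{\epsilon_2}\!-\!\cdots$, a flip at one pair forces the flips at the others around the cycle, and because the cycle has even length the forced assignment is globally consistent (no parity obstruction). I expect this — verifying that walking around each even alternating cycle yields a consistent choice of representative in $(S_2)^{\times n}$ — to be the crux of the argument; everything after it is immediate. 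An alternative, perhaps slicker, route I would keep in reserve is purely group-theoretic: since $S_n[S_2]$ acts on the set of matchings $\{\,\gamma\Sigma_0\gamma^{-1} : \gamma\in S_{2n}\,\}$, and two matchings are in the same $S_n[S_2]$-orbit iff they have the same coset type, it suffices to exhibit one representative of each coset type of the form \eqref{Sigma1 canonical} and check the orbit-counting matches; but the constructive cycle-walking proof above is more transparent and is what I would write up.
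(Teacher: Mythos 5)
Your proof is correct and follows essentially the same route as the paper's: both rest on decomposing $\Sigma_0\cup\Sigma_1$ into alternating (hence even) cycles and then removing the same-sign $\Sigma_1$-edges by flips $(i^-\,i^+)$ applied independently to the $\Sigma_0$-pairs within each cycle. The only difference is bookkeeping --- the paper counts the ``H-edges'' per loop, shows by a parity argument that their number is even, and flips the runs of $\Sigma_0$-pairs between consecutive H-edges, whereas you invoke the proper $2$-colourability of an even cycle and realise that colouring by flips; this is the same mechanism in slightly cleaner clothes.
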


\begin{proof}[Proof of \ref{lem:can S1}]
We draw another graph of $W(\sigma)$ emphasising the structure of loops, with $1^+, 2^+, \dots$ along the upper line and $1^-, 2^-, \dots$ along the lower line. We connect the points $i^\pm$ and $j^\pm$ when $(i^\pm, j^\pm)$ belong to $\Sigma_0$ or $\Sigma_1$ as shown in Figure \ref{fig:NH}.
The horizontal edges of $\Sigma_1$\,, namely those connecting $(+,+)$ or $(-,-)$, will be called H-edges. The other edges of $\Sigma_1$\,, $(+,-)$ or $(-,+)$, will be called N-edges.

\begin{figure}[t]
\begin{center}
\includegraphics[scale=0.7]{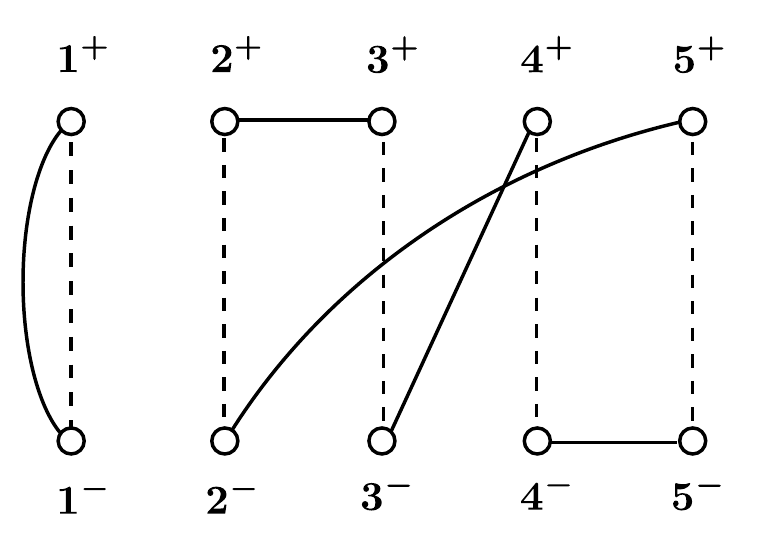}
\caption{Another graph of $W(\sigma)$. The dashed edges represent the elements of $\Sigma_0$\,, and the sold ones those of $\Sigma_1$\,. This graph contains two loops, $(N \Sigma_0)$ and $(H \Sigma_0 N \Sigma_0 H \Sigma_0 N \Sigma_0)$.}
\label{fig:NH}
\end{center}
\end{figure}

Let us prove that every loop has to have an even number of H-edges.
As discussed in \eqref{sequences S0S1}, every loop consists of even number of edges. Within the loop, only $\Sigma_0$ and N-edges change the parity $\pm$. If we circle around the loop, then there should be no change in parity. Thus, every loop $( \Sigma_1 \Sigma_0 \dots \Sigma_1 \Sigma_0 )$ satisfies
\begin{equation}
{\rm Parity} \, ( \Sigma_1 \Sigma_0 \dots \Sigma_1 \Sigma_0 ) = 
(-1)^{\# (N)} (-1)^{\# (\Sigma_0)} = +1.
\label{parity rule}
\end{equation}
Since $\# (H) + \# (N) + \# (\Sigma_0)$ is even for each loop, the number of H-edges is also even.

The statement \eqref{Sigma1 canonical} is equivalent to saying that we can remove all H-edges by gauge transformations $S_n [S_2]$. Consider how the flip $(i^- i^+) \in S_n [S_2]$ acts on the $\Sigma_1$ edges connected to the points $i^\pm$. Inspecting Figure \ref{fig:NH}, we find
\begin{equation}
(i^- i^+) \, : \, 
\begin{cases}
H (\Sigma_0)_{i^\pm} H \ \to \ N (\Sigma_0)_{i^\pm} N
\\
H (\Sigma_0)_{i^\pm} N \ \to \ N (\Sigma_0)_{i^\pm} H
\\
N (\Sigma_0)_{i^\pm} H \ \to \ H (\Sigma_0)_{i^\pm} N
\end{cases} .
\end{equation}
The flip cannot change $H \Sigma_0 N$ into $N \Sigma_0 N$ or $H \Sigma_0 H$, because it violates the parity rule \eqref{parity rule}. The same is true for $N \Sigma_0 H$.
Now, if we have a flavour loop $(H \Sigma_0 N \cdots \Sigma_0 H \dots )$, flipping all the edges in $\Sigma_0$ between two H-edges will convert the loop to $(N \Sigma_0 N \cdots \Sigma_0 N \dots )$. Since the number of H-edges is even, applying this process repeatedly will remove all H-edges. This means that there is a gauge transformation which converts $\Sigma_1$ to $ \Sigma_1'$ of the form (\ref{Sigma1 canonical}). 
\end{proof}

The permutation  $\tau $ is itself defined up to conjugation in $S_n$. This is in fact a 
way to understand the correspondence between partitions of $n$ and the double coset 
space $  S_n[ S_2] \setminus  S_{2n} / S_n [ S_2 ]  $, as we will explain subsequently.

Let us denote by $(\ell_1, \ell_2, \dots)$ the number of edges in the loops of the graph $W(\sigma)$. These $\{ \ell_i \}$ are all even, and satisfy $\sum_i \ell_i = 2n$. Thus, $\lambda_i \equiv \ell_i/2$ defines a partition of $n$. This partition $\lambda$ is same as the cycle decomposition of $\tau$.
We are going to relate the number of loops with the number of cycles in $\Sigma_{\rm tot} \equiv \Sigma_0 \Sigma'_1$.

\begin{cor}\label{cor: Stot}
$\Sigma_{\rm tot}$ maps minus variables to minus variables, plus to plus.
\end{cor}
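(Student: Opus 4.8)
The plan is to prove Corollary~\ref{cor: Stot} by a direct computation of the action of $\Sigma_{\rm tot}=\Sigma_0\Sigma'_1$ on an arbitrary index, using the canonical form of $\Sigma'_1$ supplied by Lemma~\ref{lem:can S1}. Recall that $\Sigma_0=(1^-\,1^+)(2^-\,2^+)\cdots(n^-\,n^+)$ and, after the $S_n[S_2]$ gauge transformation, $\Sigma'_1=(1^-\,\tau(1)^+)(2^-\,\tau(2)^+)\cdots(n^-\,\tau(n)^+)$ for some $\tau\in S_n$. Both are products of disjoint transpositions, so the image of any single index is read off from the unique transposition that contains it.

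First I would trace a minus index $i^-$: the transposition $(i^-\,\tau(i)^+)$ of $\Sigma'_1$ sends $i^-$ to the plus index $\tau(i)^+$, and then $\Sigma_0$ acts through $(\tau(i)^-\,\tau(i)^+)$ to return $\tau(i)^+\mapsto\tau(i)^-$; hence $\Sigma_{\rm tot}\colon i^-\mapsto\tau(i)^-$, again a minus index. Symmetrically, a plus index $j^+$ lies in the transposition $(\tau^{-1}(j)^-\,j^+)$ of $\Sigma'_1$, which sends $j^+\mapsto\tau^{-1}(j)^-$, and then $\Sigma_0$ returns $\tau^{-1}(j)^-\mapsto\tau^{-1}(j)^+$; hence $\Sigma_{\rm tot}\colon j^+\mapsto\tau^{-1}(j)^+$, again a plus index. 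This proves the statement, and as a byproduct exhibits $\Sigma_{\rm tot}$ as the permutation acting by $\tau$ on the $n$ minus labels and by $\tau^{-1}$ on the $n$ plus labels.

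There is no genuine obstacle here; the only point requiring care is fixing the composition convention (whether $\Sigma'_1$ or $\Sigma_0$ acts first), which merely interchanges the roles of $\tau$ and $\tau^{-1}$ and so does not affect the parity-preservation statement. This corollary is the key input for \eqref{Wsig identity}: since $\Sigma_{\rm tot}$ restricts to $\tau$ on the minus labels and to $\tau^{-1}$ on the plus labels, its cycle type is the ``doubling'' of the cycle type of $\tau$, so $C(\Sigma_{\rm tot})=2\,C(\tau)$; combined with the identification (from the proof of Lemma~\ref{lem:can S1}) of $C(\tau)$ with the number $z(\sigma)$ of flavour loops, together with the gauge invariance of $C(\Sigma_0\Sigma_1)$ under $S_n[S_2]$, this yields $z(\sigma)=\tfrac12\,C(\Sigma_0\,\sigma^{-1}\Sigma_0\,\sigma)$.
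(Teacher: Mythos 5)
Your proof is correct and is essentially identical to the paper's: both trace $i^-$ and $i^+$ through $\Sigma'_1$ in its canonical form \eqref{Sigma1 canonical} and then through $\Sigma_0$, obtaining $\Sigma_{\rm tot}(i^-)=\tau(i)^-$ and $\Sigma_{\rm tot}(i^+)=\tau^{-1}(i)^+$. The extra remarks on the composition convention and on the consequence $C(\Sigma_{\rm tot})=2C(\tau)$ are accurate but go beyond what the corollary itself requires.
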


\begin{proof}[Proof of \ref{cor: Stot}]
From \eqref{Sigma1 canonical} we find $\Sigma_{\rm tot} (i^-) = \Sigma_0 (\tau(i)^+) = \tau (i)^-$ and $\Sigma_{\rm tot} (i^+) = \Sigma_0 (\tau^{-1} (i)^-) = \tau^{-1} (i)^+$.
\end{proof}

Thus, $\Sigma_{\rm tot}$ splits into two disjoint actions $\Sigma_{\rm tot}^- \times \Sigma_{\rm tot}^+$\,, where $\Sigma_{\rm tot}^\pm$ acts on the set $V_n^\pm = (1^\pm, \dots, n^\pm)$. 
Then, the number of cycles is equal to
\begin{equation}
C(\Sigma_0 \Sigma_1) = C(\Sigma_{\rm tot}) = C(\tau) + C(\tau^{-1}) = 2 C(\tau).
\label{CSigTot}
\end{equation}

\begin{lem}\label{lem:conn tau}
The number of loops in $\Sigma_{\rm tot}^-$ acting on $V_n^-$ is equal to $C(\tau)$, and similarly for $\Sigma_{\rm tot}^+$ acting on $V_n^+$\,.
\end{lem}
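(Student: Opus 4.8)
\textbf{Proof proposal for Lemma~\ref{lem:conn tau}.}

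The plan is to trace a single loop of the graph of $W(\sigma)$ in the canonical gauge of Lemma~\ref{lem:can S1} and read off which points of $V_n^-$ it visits. First I would record the consequence of the proof of Corollary~\ref{cor: Stot}: once $\Sigma_1$ has been brought to the form \eqref{Sigma1 canonical}, $\Sigma'_1 = (1^-\,\tau(1)^+)\cdots(n^-\,\tau(n)^+)$, every edge of both $\Sigma_0$ and $\Sigma'_1$ is an N-edge (it joins a minus point to a plus point), and $\Sigma_{\rm tot} = \Sigma_0\Sigma'_1$ restricts to $\Sigma_{\rm tot}^-(i^-) = \tau(i)^-$ on $V_n^-$ and to $\Sigma_{\rm tot}^+(i^+) = \tau^{-1}(i)^+$ on $V_n^+$. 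Since the loop count $z(\sigma)$ of $W(\sigma)$ is unchanged under the $S_n[S_2]$ gauge transformations used to reach this form (this is \eqref{symmetry_z(sigma)}, equivalently \eqref{equivalence_classes_W}), it suffices to work in this gauge.

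Next I would follow a loop. Starting from $i^-$, the unique $\Sigma_0$-edge leads to $i^+$, and the unique $\Sigma'_1$-edge from $i^+$ leads to $\tau^{-1}(i)^-$; iterating, the minus points met along the loop are $i^-, \tau^{-1}(i)^-, \tau^{-2}(i)^-, \dots$, and the walk closes precisely when $\tau^{-k}(i) = i$. Hence the set of minus points lying on a loop is exactly a cycle of $\tau$ acting on $V_n^-$, i.e. a cycle of $\Sigma_{\rm tot}^-$; the loop through a $\tau$-cycle of length $\ell$ has $2\ell$ edges; and the map sending a loop to its set of minus points is a bijection from the loops of $W(\sigma)$ onto the cycles of $\Sigma_{\rm tot}^-$. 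This gives that the number of such loops is $C(\tau)$. Reading the same walk off at its plus points yields the sequence $i^+, \tau(i)^+, \tau^2(i)^+, \dots$, i.e. a cycle of $\Sigma_{\rm tot}^+$; since $C(\tau^{-1}) = C(\tau)$, this settles the $V_n^+$ half of the statement.

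I do not expect a serious obstacle here; the only thing requiring care is bookkeeping. One must keep straight that a step ``$\Sigma_0$ then $\Sigma'_1$'' advances a minus index by $\tau^{-1}$ (the sense being reversed according to the starting parity), and one must use that in the canonical gauge the graph has only N-edges, so that a loop genuinely alternates between the minus and plus sides with no ``horizontal'' detour. Both facts are already in place: the first is exactly the computation in the proof of Corollary~\ref{cor: Stot}, and the second is the output of Lemma~\ref{lem:can S1} together with the parity rule \eqref{parity rule}. Combined with \eqref{CSigTot}, this lemma then yields $z(\sigma) = C(\tau) = \tfrac{1}{2} C(\Sigma_0\Sigma_1) = \tfrac{1}{2} C(\Sigma_0\sigma^{-1}\Sigma_0\sigma)$, which is the identity \eqref{Wsig identity} that the appendix is after.
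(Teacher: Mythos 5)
Your argument is correct and is essentially the paper's own proof: the paper's displayed computation contracts the flavour Kronecker deltas, $N_f^{\#(\mathrm{loops})}=\prod_{h}\delta^{c_{h^-}}_{c_{\tau(h)^-}}=N_f^{C(\tau)}$, which is exactly your loop-tracing bijection between flavour loops and cycles of $\tau$ written algebraically, and the paper explicitly notes the same graphical derivation (Figure~\ref{fig:SigPm}) as an alternative. The only nitpick is a harmless orientation slip on the plus side: the walk visits $i^+,\tau^{-1}(i)^+,\dots$ rather than $i^+,\tau(i)^+,\dots$, but the cycle as a set, and hence the count, is unaffected.
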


\begin{proof}[Proof of \ref{lem:conn tau}]
Let us define
\begin{equation}
\Sigma_{\rm tot} = \Sigma_{\rm tot}^- \, \Sigma_{\rm tot}^+, \qquad
\Sigma_{\rm tot}^- \equiv \prod_{i=1}^n (i^- \tau(i)^-), \qquad
\Sigma_{\rm tot}^+ \equiv \prod_{i=1}^n (i^+ \tau^{-1} (i)^+).
\end{equation}
We can express the number of loops in $\Sigma_{\rm tot}^\pm$ as
\begin{equation}
N_f^{\# ({\rm loops})} = \prod_{h=1}^n \delta^{c_{h^-}}_{c_{\tau (h)^-}} 
= \prod_{h=1}^n \delta^{c_{h^+}}_{c_{\tau (h)^+}} = N_f^{C(\tau)},
\label{NumLoops}
\end{equation}
showing that $\# ({\rm loops}) = C(\tau)$. It can also be derived graphically as in Figure \ref{fig:SigPm}.
\end{proof}

\begin{figure}[t]
\begin{center}
\includegraphics[scale=0.7]{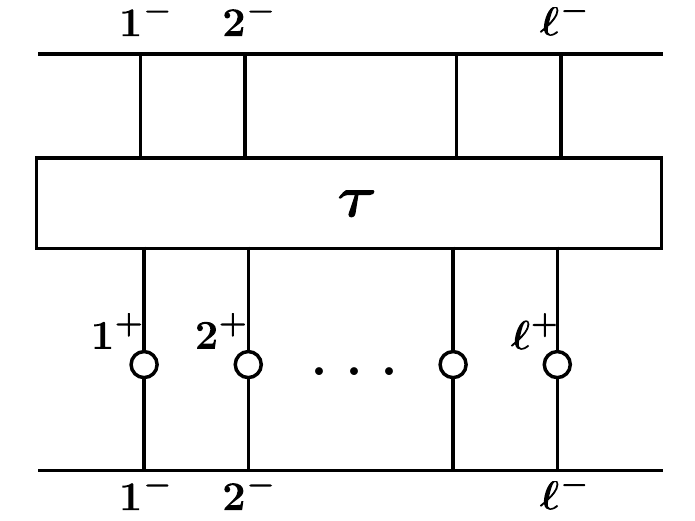}
\caption{$\Sigma_{\rm tot}^-$ acting on $(1^- \dots n^-)$. $\Sigma_0$ interchanges $(i^- i^+)$ and $\Sigma'_1$ permutes by $\tau$. We can draw a similar graph for $(1^+ \dots n^+)$.}
\label{fig:SigPm}
\end{center}
\end{figure}

The identity \eqref{Wsig identity} follows from \eqref{NumLoops} and \eqref{CSigTot}, 
\begin{equation}
W ( \sigma ) = N_f^{z(\sigma)} = N_f^{\# ({\rm loops})} = N_f^{C(\tau)} , \qquad
z(\sigma) =\frac{1}{2}C(\Sigma_0\sigma^{-1} \Sigma_0\sigma) .
\end{equation}

\bigskip
The above discussion gives a concrete insight into the coset type \eqref{def:tp coset-type};
A coset type is a partition of $n$ which parametrises the elements of double coset $S_n [S_2] \backslash  S_{2n} / S_n[S_2 ] $.
Lemma \ref{lem:can S1} says that $ \Sigma_1 = \sigma^{-1} \Sigma_0 \sigma$ can be gauge transformed by $ \xi \in S_n [ S_2 ] $ to the form \eqref{Sigma1 canonical}. 
We take $\tilde \tau$ to be a permutation in $S_{2n}$ which leaves $i^-$ fixed and acts nontrivially on $i^+\  (i=1,2, \dots n)$.
The gauge transformation of $\xi$ can be written as
\begin{align}
\xi^{-1} \sigma^{-1} \Sigma_0 \sigma \xi  
&= \tilde \tau^{-1} \Sigma_0 \tilde \tau
\label{gauge transform tau} \\
&= (\tilde \tau (1^-) \tilde \tau (1^+)) (\tilde \tau (2^-) \tilde \tau (2^+)) \cdots (\tilde \tau (n^-) \tilde \tau (n^+))
\notag \\
&\equiv ( 1^-  \tau( 1 )^+ ) ( 2^- \tau(2)^+  ) \cdots  (n^- \tau(n)^+ ). 
\notag 
\end{align}
So $\tilde \tau \xi^{-1} \sigma^{-1} $ is in the stabiliser of $ \Sigma_0$, and $\tilde \tau \xi^{-1} \sigma^{-1} = \eta \in S_n [ S_ 2 ] $ for some $ \eta$. 
Hence any $ \sigma $ can be written as 
\bea 
\tilde \tau = \eta \sigma \xi, \qquad
\eta, \xi \in S_n[S_2]
\eea
Therefore, the elements of the double coset $ S_n [S_2] \backslash S_{2n} / S_n[S_2 ] $ correspond to the permutations $\tilde \tau \in S_{2n}$, or the permutations $ \tau \in S_n$.

The condition \eqref{gauge transform tau} does not completely fix the gauge. The residual gauge freedom is conjugation of $\tau$ by $\xi \in S_n \subset S_n[S_2]$, which should not change the double coset element. 
As a result, the double coset elements are in 1-1 correspondence with the conjugacy classes in $S_n$, i.e. partitions of $n$ called coset types.


\section{Diagonal two-point functions}\label{diagonal_twopoint_functions}
\setcounter{equation}{0}

In this Appendix, we will derive (\ref{two-pt_rep_basis}) and compute the normalisation factor.

\subsection{Proof of diagonality}\label{diagonal_twopoint_functions1}

Let us first rewrite the two-point functions of the permutation basis
\begin{align}
\langle \cO_{\alpha_1}\cO_{\alpha_2}\rangle 
&=\sum_{\sigma\in S_{2n}} 
W(\sigma) \, tr_{2n}(\alpha_1 \sigma \alpha_2 \sigma^{-1}).
\label{2pt_O_alpha_to_compute}
\end{align}
The colour factor can be expanded using (\ref{Schur_Weyl_duality}) as 
\begin{align}
tr_{2n}(\alpha_1 \sigma\alpha_2 \sigma^{-1})
=&\sum_R Dim(R) \, \chi^R(\alpha_1 \sigma \alpha_2 \sigma^{-1})
\nonumber \\
=&
\sum_R Dim(R) D^R_{ij} (\alpha_1)D^R_{jk}(\sigma) D^R_{kl}(\alpha_2)
D^R_{li}( \sigma^{-1}),
\label{SWD}
\end{align}
where $Dim(R)$ is the dimension of $R$ associated with $U(N_c)$.
The flavour factor can be written  
from (\ref{W_sigma_Omega_delta}) as 
\begin{align}
W(\sigma)
=\frac{N_f^{n}}{(2n)!}\sum_{\Lambda_1}d_{\Lambda_1}
D^{\Lambda_1}_{ab}(\Omega_{2n}^{(f)})D^{\Lambda_1}_{ba}(\sigma).
\end{align}
The sum over $\sigma$ in  
(\ref{2pt_O_alpha_to_compute})
can be removed 
using  (\ref{BHR_163})
to obtain 
\begin{align}
\langle \cO_{\alpha_1}\cO_{\alpha_2}\rangle
=&
\sum_{R,\Lambda_1}\sum_{i,j,k,l,a,b} Dim(R) D^R_{ij}(\alpha_1)  D^R_{kl}(\alpha_2)
N_f^{n}D^{\Lambda_1}_{ab}(\Omega_{2n}^{(f)})
\sum_{\tau}
\Clebsch{\tau}{\Lambda_1}{R}{R}{b}{i}{j} \, 
\Clebsch{\tau}{\Lambda_1}{R}{R}{a}{l}{k}
\label{naive_basis_two-pt}
\end{align}

We now compute the two-point functions of the representation basis
\begin{align}
\langle \cO^{R,\Lambda_1,\tau}
O^{S,\Lambda_1^{\prime},\tau^{\prime}}
\rangle
=
B^{\Lambda_1}_k B^{\Lambda_1^{\prime}}_{k^{\prime}}
\Clebsch{\tau}{\Lambda_1}{R}{R}{k}{i}{j} \, 
\Clebsch{\tau'}{\Lambda'_1}{R'}{R'}{k'}{i'}{j'}
\sum_{\alpha_1,\alpha_2\in S_{2n}}
D^R_{ij}( \alpha_1^{-1})D^{R^{\prime}}_{i^{\prime}j^{\prime}}( \alpha_2^{-1})
\langle \cO_{\alpha_1}
\cO_{\alpha_2}
\rangle.
\end{align}
Substituting (\ref{naive_basis_two-pt}) into this and 
using the relation (\ref{BHR_160}),  
\begin{align}
\langle \cO^{R,\Lambda_1,\tau}
O^{S,\Lambda_1^{\prime},\tau^{\prime}}
\rangle
=
\delta_{RS}\delta_{\tau\tau^{\prime}}
\delta_{\Lambda_1\Lambda_1^{\prime}}
\left(\frac{(2n)!}{d_R}\right)^2 Dim(R) 
N_f^{n}
\langle \Lambda_1 \rightarrow 1_{S_{n}[S_2]} \mid
\Omega_{2n}^{(f)}
\mid 
\Lambda_1 \rightarrow 1_{S_{n}[S_2]} \rangle.
\label{2pt branching}
\end{align}
The last factor has several expressions 
\begin{align}
\langle \Lambda_1 \rightarrow 1_{S_{n}[S_2]} \mid
\Omega_{2n}^{(f)}
\mid 
\Lambda_1 \rightarrow 1_{S_{n}[S_2]} \rangle
&= B^{\Lambda_1}_{k_1} B^{\Lambda_1}_{k_2} D^{\Lambda_1}_{k_1 k_2} ( \Omega_{2n}^{(f)} ) 
\nonumber \\
&=
\chi_{\Lambda_1}( \Omega_{2n}^{(f)} \, p_{1_{S_{n}[S_2]}} )
\nonumber \\
&=
\omega_{\Lambda_1/2}(\Omega_{2n}^{(f)}). 
\label{def:flavour norm}
\end{align}
The second equality comes from $\gamma_1 \Omega_{2n}^{(f)} \gamma_2 =\Omega_{2n}^{(f)} $ 
for $\gamma_1,\gamma_2 \in S_{n}[S_2]$.
It is non-zero only for the case
$\Lambda_1,\Lambda_1^{\prime}$ are even Young diagrams.

\subsection{Twisting Wick-contraction rules}\label{app:zonal}

We derive the formula \eqref{formula_character_omega_flavour} by developing  the connection with \cite{MacdonaldBook}. 
The function $\omega_{\Lambda_1/2}(\Omega_{2n}^{(f)})$ which appears in the normalisation of the diagonal two-point functions \eqref{def:flavour norm} is equal to the zonal spherical function of the Gelfand pair $(GL(N_f), O(N_f))$, introduced in \cite{MacdonaldBook}.\footnote{The colour factor $Dim(R)$ in \eqref{Schur_Weyl_duality} is replaced by the Schur polynomial of the eigenvalues of $Y^2$ \cite{MacdonaldBook}.}
In developing this connection, it  is instructive to introduce a  twist  of the  two-point functions parametrised by matrices 
$T_{ab}\in GL(N_f) $ and $Y^i_j \in GL(N_c)$.

Since $P^{R,\Lambda_1,\tau}$ does not depend explicitly on $N_c$ or $N_f$, our construction of diagonal operators can be readily generalised to the case where the Wick-contraction rules are twisted,
\begin{equation}
\contraction{(}{\Phi_a}{)_i^j \, (}{\Phi_b}
(\Phi_a)_i^j \, (\Phi_b)_k^l = T_{ab} \, Y^l_i \, Y^j_k
\label{UNc Wick twisted}
\end{equation}
The two-point functions \eqref{mtr Wick formula} become
\begin{align}
\langle \cO_{\alpha_1} \cO_{\alpha_2} \rangle 
&= \delta_{\vec a} \, \delta^{\vec b} \sum_{\sigma \in \cS_{2n} } \prod_{k=1}^{2n} T^{a_k}_{b_{\sigma (k)}} 
Y^{i_{\sigma(k)}}_{j_{\alpha_2 (k)}} Y_{i_{\alpha_1 \sigma (k)}}^{j_k}
= \sum_{\sigma \in \cS_{2n} }
\( \delta_{\vec a} \, \delta^{\vec b} \, \prod_{k=1}^{2n} T^{a_k}_{b_{\sigma(k)}} \) 
\( \prod_{k=1}^{2n} (Y^2)^{i_k}_{i_{\alpha_1 \sigma \alpha_2 \sigma^{-1} (k)}} \)
\notag \\
&\equiv \sum_{\sigma \in \cS_{2n} } W_T (\sigma) P_Y (\alpha_1 \sigma \alpha_2 \sigma^{-1}),
\label{mtr Wick formula twists}
\end{align}
where $W_T$ collects the flavour factors and $P_Y$ the colour factor. They can be written as the power sum,
\begin{align}
P_Y (\rho) = \prod_{\ell=1}^{2n} tr (Y^{2\ell})^{p_\ell}, \qquad
W_T (\sigma) = \prod_{k=1}^n \pi_{k}(T^T T)^{\tilde p_k} , \qquad
\pi_k (X) \equiv tr (X^k) 
\label{W_sigma_T brauer}
\end{align}
where $p \vdash 2n$ is the cycle type of $\rho = \alpha_1 \sigma \alpha_2 \sigma^{-1}$, and $\tilde p \vdash n$ is the coset type of $\sigma$. When we untwist as $T=Y=1$, both factors reduce to
\begin{align}
P_Y (\alpha_1 \sigma \alpha_2 \sigma^{-1}) \big|_{Y=1} &= \prod_{i=1}^{2n} N_c^{p_i} = N_c^{C(\alpha_1 \sigma \alpha_2 \sigma^{-1})}
\\
W_T (\sigma) \big|_{T=1} &= \prod_{k=1}^n N_f^{\tilde p_k} = N_f^{z(\sigma)} \,.
\end{align}
A graphical representation of $W_T(\sigma)$ is shown in Figure \ref{fig:W(sigma T)}.

\begin{figure}
\begin{center}
\includegraphics[scale=0.7]{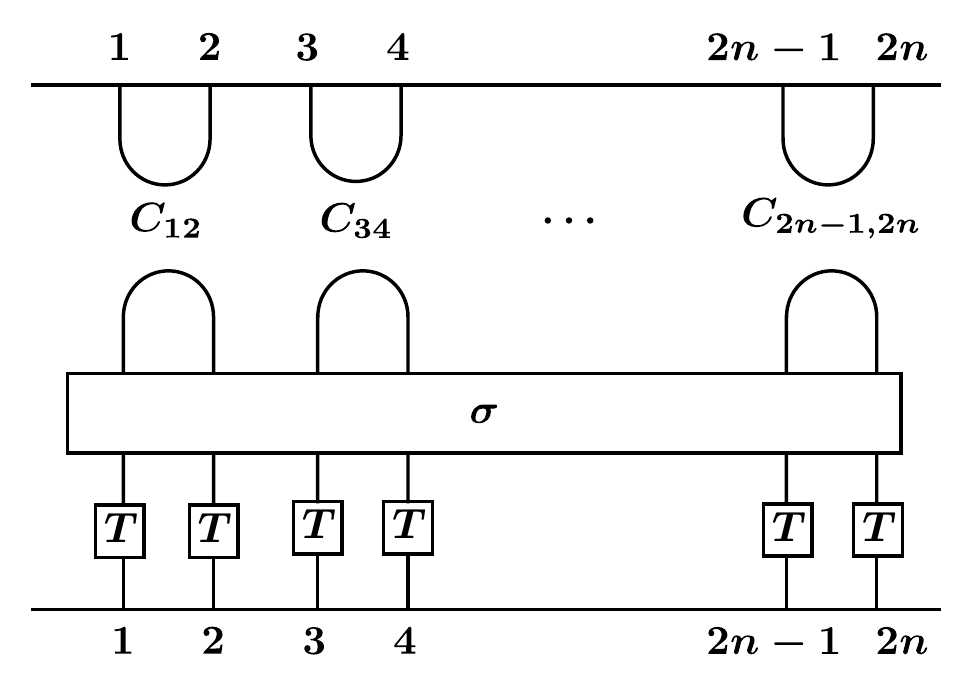}
\caption{Generalisation of $W(\sigma)$ in Figure \ref{fig:W(sigma)} including the twist $T$.}
\label{fig:W(sigma T)}
\end{center}
\end{figure}

Owing to \eqref{W_sigma_T brauer}, the operator $\Omega_{2 n}^{(f)}$ in \eqref{def:Omega factors} is twisted as
\begin{equation}
\Omega_{2 n}^{(f_T)}=\frac{1}{N_f^{n}}\sum_{\sigma \in S_{2n}} \( \prod_{k=1}^n \pi_{k} (T^T T)^{\tilde p_k} \) \sigma^{-1} .
\end{equation}
The flavour factor in the diagonal two-point functions in \eqref{two-pt_rep_basis} becomes
\begin{align}
\omega_{\Lambda_1/2} (\Omega_{2n}^{(f_T)}) 
&= \chi_{\Lambda_1} \(\Omega_{2n}^{(f_T)} p_{1_{S_n[S_2]}} \)
= \frac{1}{N_f^{n}}\sum_{\sigma \in S_{2n}} \( \prod_{k=1}^n \pi_{k} (T^T T)^{\tilde p_k} \) 
\chi_{\Lambda_1} \Bigl( \sigma p_{1_{S_n[S_2]}} \Bigr),
\notag \\
&= \abs{S_n[S_2]} \, Z_{\Lambda} (T^T T)
\label{twisted omega factor}
\end{align}
where we used $\chi_{\Lambda_1} \( \sigma^{-1} \) = \chi_{\Lambda_1} \( \sigma \)$.
It turns out that the function $\omega_{\Lambda_1/2} (\Omega_{2n}^{(f_T)}) $ is identical, up to normalisation, to the zonal spherical function $Z_{\Lambda} (T^T T)$ of the Gelfand pair $(GL(N_f), O(N_f))$ introduced in \cite{MacdonaldBook}. In \cite{MacdonaldBook} it is shown that
\begin{equation}
Z_{\Lambda} (1) = \prod_{(i,j) \in \Lambda} (N_f + 2j-i-1),
\label{zonal sph formula}
\end{equation}
where $(i,j)$ specifies the position of the Young tableau $\Lambda$. From \eqref{zonal sph formula}, we reproduce the formula \eqref{formula_character_omega_flavour}.

The overall normalisation of $\omega_{\Lambda_1/2} (\Omega_{2n}^{(f_T)})$ is determined as follows.
If we take $T=1$ and $N_f \gg 1$ in \eqref{twisted omega factor}, the leading terms come from $\sigma \in S_n[S_2]$, whose coset type is $[2^n]$. Since $(S_{2n}, S_n[S_2])$ is a Gelfand pair, the restriction to $S_n[S_2]$-invariant subspace is multiplicity-free; $\chi_{\Lambda_1} ( p_{1_{S_n[S_2]}} ) = 1$.


\section{Examples of diagonal operators}\label{app:example}

We will explain how to construct the diagonal operators \eqref{def:OR,Lambda1,tau} in $U(N_c)$ theories
\begin{align}
\cO^{R,\Lambda_1,\tau} = \sum_{k=1}^{d_{\Lambda_1}} B^{\Lambda_1}_k \, \sum_{\alpha \in S_{2n}} \sum_{i,j=1}^{d_R} 
\Clebsch{\tau}{\Lambda_1}{R}{R}{k}{i}{j}
D^R_{ij} (\alpha^{-1}) \, \cO_{\alpha} \,, 
\label{def:OR,Lambda1,tau2}
\end{align}
and give explicit examples at $2n=2,4$.

\subsection{Generality}\label{app:generality}

Let us explain our strategy.
First, we classify all irreducible representations $R$ such that $R \otimes R$ contains an even partition $\Lambda_1$\,. 
The irreducible decomposition of $R \otimes R$ can be computed from the character table by using \eqref{def:CG number}.

Second, we specify an orthonormal basis of the irreducible representations of $S_{2n}$ explicitly. We use the Young-Yamanouchi orthonormal form for this purpose \cite{AudenaertNote, HamermeshBook}. The Young-Yamanouchi basis is labelled by the standard Young tableaux $\lambda$ of shape $R$, and the transposition $(j,j+1)$ acts on them as
\begin{equation}
(j,j+1) \, \ket{R, \lambda} = \frac{1}{\rho_\lambda (j+1,j)} \, \ket{R, \lambda} + \sqrt{1 - \frac{1}{\rho_\lambda (j+1,j)^2} } \, \ket{R, (j,j+1) \lambda} \,,
\label{def:Ybasis}
\end{equation}
where $\rho_\lambda (j+1,j)$ is the axial distance from $j+1$ to $j$ in the standard Young tableau $\lambda$.\footnote{The axial distance between $a$ and $b$ is defined by counting the number of boxes we need to pass through from $a$ to $b$ in the Young tableau. We add $+1$ by going left or down, and $-1$ by going right or up. For example, $\rho_{\tiny \young(123,4)} (3,4) = -\rho_{\tiny \young(123,4)} (4,3) = +3$.}
The matrix representation of all other elements follows from \eqref{def:Ybasis}. It is straightforward to compute the branching coefficient from \eqref{branching via projector}.

The final ingredient to obtain the diagonal operators \eqref{def:OR,Lambda1,tau2} is the CG coefficient defined by \eqref{def:CG coeff}.
To extract $S^{\tau, \Lambda R R'}_{kij}$, we apply $\sigma \in S_{2n}$ to \eqref{def:CG coeff} and compare both sides.

Let us explain the computation of the CG coefficients in detail. We rewrite the matrix representation of the product $D^R \otimes D^R (\sigma)$ into a block diagonal form by a similarity transformation, 
\begin{equation}
(U_R^T)^{-1} \( D^R \otimes D^R (\sigma) \) U_R =
\begin{pmatrix}
D^{r_1} (\sigma) & & \\
& D^{r_2} (\sigma) & \\
& & \ddots
\end{pmatrix}.
\end{equation}
The CG coefficients are equal to the elements of the rotation matrix $U_R$\,. 
We compute each column of $U_R$ by using the Young symmetriser,
\begin{equation}
P_\lambda = \sum_{\sigma \in S_{2n}} p_\lambda (\sigma) \, \sigma
\equiv N_\lambda \prod_{k \in {\rm Column} (\lambda) } A_k \, \prod_{\ell \in {\rm Row} (\lambda) } S_\ell \,,
\label{def:Young symmetrizer}
\end{equation}
where $p_\lambda (\sigma)$ is a coefficient, $N_\lambda$ is a normalisation constant, and $A_k$ (or $S_\ell$) is anti-symmetric (or symmetric) combination of the entries in the $k$-th column (or $\ell$-th row) of the standard Young tableau $\lambda$, respectively. For example,
\begin{equation}
P_{\tiny \young(12,3)} = N_{\tiny \young(12,3)} \[ {\rm id} - (13) \] \[ {\rm id} + (12) \], \qquad
P_{\tiny \young(13,2)} = N_{\tiny \young(13,2)} \[ {\rm id} - (12) \] \[ {\rm id} + (13) \].
\label{YS type21}
\end{equation}
The Young symmetriser projects $D^{R \otimes R}$ onto the state corresponding to $\{ \tilde e_\lambda \}$.\footnote{The new basis $\{ \tilde e_\lambda \}$ is not orthogonal. The new basis is related to the Young-Yamanouchi basis in a trivial way after orthogonalization, in the simple cases studied here.}
Thus, the combination $\sum_\sigma p_{\lambda} (\sigma) \, D^{R \otimes R} (\sigma)$ becomes a rank-one matrix corresponding to a single eigenvector $\tilde e_\lambda$\,. By collecting all eigenvectors (and orthogonalizing them appropriately), we obtain the rotation matrix $U_R$\,.

\subsection{Explicit operators}

\subsubsection*{Length two}

The group $S_2$ has two irreducible representations, symmetric and anti-symmetric. 
Their tensor products decompose as
\begin{equation}
\rep{1,1} \otimes \rep{1,1} = \rep{2}, \qquad
\rep{2} \otimes \rep{2} = \rep{2},
\end{equation}
The symmetric representation $\rep{2}$ is an even partition. The representation matrices are
\begin{equation}
D^{\rep{2}} (\alpha) = 1, \qquad
D^{\rep{1,1}} (\alpha) = {\rm sign} \, (\alpha).
\end{equation}
The CG coefficients and the branching coefficients are trivial. 
The operators \eqref{def:OR,Lambda1,tau2} are given by
\begin{alignat}{9}
O^{\rep{2},\rep{2}} &= O_{\rm id} + O_{(12)}
& &= \tr ( \Phi^{a} ) \, \tr ( \Phi^{a} ) + \tr ( \Phi^{a} \Phi^{a} ),
\notag \\
O^{\rep{2},\rep{1,1}} &= O_{\rm id} - O_{(12)}
& &= \tr ( \Phi^{a} ) \, \tr ( \Phi^{a} ) - \tr ( \Phi^{a} \Phi^{a} ).
\end{alignat}

\subsubsection*{Length four}

We need the following irreducible representations of $S_4$
\begin{alignat}{9}
R &= \{ \rep{1^4}, \rep{2,1^2}, \rep{2^2}, \rep{3,1},  \rep{4} \}, &\qquad d_R &= \{ 1,3,2,3,1 \}, 
\notag \\
\Lambda_1 &= \{ \rep{2^2}, \rep{4} \}, &\qquad d_{\Lambda_1} &= \{ 2, 1 \}.
\end{alignat}
The tensor products decompose as
\begin{equation}
\begin{gathered}
\phantom{ }
\rep{1^4} \otimes \rep{1^4} = \rep{4}, \qquad
\rep{2^2} \otimes \rep{2^2} = \rep{1^4} \oplus \rep{2^2} \oplus \rep{4}, \qquad
\rep{4} \otimes \rep{4} = \rep{4},
\\[1mm]
\rep{2,1^2} \otimes \rep{2,1^2} = \rep{3,1} \otimes \rep{3,1}
= \rep{2,1^2} \oplus \rep{2^2} \oplus \rep{3,1} \oplus \rep{4}.
\end{gathered}
\end{equation}
This decomposition is multiplicity-free, so we can drop the index $\tau$ in $\cO^{R,\Lambda_1,\tau}$.

After the procedures of Appendix \ref{app:generality}, we obtain 
\begin{align}
O^{\rep{4},\rep{4}} &= \tr( a_1 )^2 \, \tr( a_2 )^2 + 2 \, \tr( a_1 )^2 \, \tr( a_2 a_2 ) + 4 \, \tr( a_1 ) \, \tr( a_2 ) \, \tr( a_1 a_2 )+8 \, \tr( a_1 ) \, \tr( a_1 a_2 a_2 )
\notag \\
&\quad +2 \, \tr( a_1 a_2 )^2+\, \tr( a_1 a_1 ) \, \tr( a_2 a_2 )+4 \, \tr( a_1 a_1 a_2 a_2 )+2 \, \tr( a_1 a_2 a_1 a_2 ) ,
\notag \\[1mm]
O^{\rep{1^4},\rep{4}} &= 
\tr( a_1 )^2 \, \tr( a_2 )^2 - 2 \, \tr( a_1 )^2 \, \tr( a_2 a_2 ) - 4 \, \tr( a_1 ) \, \tr( a_2 ) \, \tr( a_1 a_2 ) + 8 \, \tr( a_1 ) \, \tr( a_1 a_2 a_2 )
\notag \\
&\quad +2 \, \tr( a_1 a_2 )^2+\, \tr( a_1 a_1 ) \, \tr( a_2 a_2 ) - 4 \, \tr( a_1 a_1 a_2 a_2 ) - 2 \, \tr( a_1 a_2 a_1 a_2 ) ,
\notag \\[1mm]
O^{\rep{2^2},\rep{4}} &= - \frac{2}{\sqrt 2} \Big\{ \tr( a_1 )^2 \, \tr( a_2 )^2 - 4 \, \tr( a_1 ) \, \tr( a_1 a_2 a_2 )
+ \tr (a_1 a_1) \, \tr (a_2 a_2) + 2 \, \tr (a_1 a_2) \, \tr (a_1 a_2) \Big\},
\notag \\[1mm]
O^{\rep{2^2},\rep{2^2}} &= - \frac{4}{\sqrt 2} \, \Big\{ 
\tr( a_1 )^2 \, \tr( a_2 a_2 ) - \tr( a_1 ) \, \tr( a_2 ) \, \tr( a_1 a_2 )
+ \tr (a_1 a_2 a_1 a_2) - \tr (a_1 a_1 a_2 a_2 ) \Big\} ,
\notag \\[1mm]
O^{\rep{3,1},\rep{4}} &= \frac{1}{\sqrt 3} \Big\{ 3 \, \tr( a_1 )^2 \, \tr( a_2 )^2 + 2 \, \tr( a_1 )^2 \, \tr( a_2 a_2 )+4 \, \tr( a_1 ) \, \tr( a_2 ) \, \tr( a_1 a_2 )
\notag \\
&\quad -2 \, \tr( a_1 a_2 )^2 - \tr( a_1 a_1 ) \, \tr( a_2 a_2 )-4 \, \tr( a_1 a_1 a_2 a_2 )-2 \, \tr( a_1 a_2 a_1 a_2 )
\Big\},
\notag \\[1mm]
O^{\rep{2,1^2},\rep{4}} &= \frac{1}{\sqrt 3} \Big\{ 3 \, \tr( a_1 )^2 \, \tr( a_2 )^2 - 2 \, \tr( a_1 )^2 \, \tr( a_2 a_2 )
- 4 \, \tr( a_1 ) \, \tr( a_2 ) \, \tr( a_1 a_2 )
\notag \\
&\quad - 2 \, \tr( a_1 a_2 )^2 - \tr( a_1 a_1 ) \, \tr( a_2 a_2 ) + 4 \, \tr( a_1 a_1 a_2 a_2 ) + 2 \, \tr( a_1 a_2 a_1 a_2 )
\Big\},
\notag \\[1mm]
O^{\rep{3,1},\rep{2^2}} &= \frac{4}{\sqrt 6} \Big\{ \tr ( a_1 )^2 \,\tr ( a_2 a_2 ) - \tr ( a_1 ) \,\tr ( a_2 ) \,\tr ( a_1 a_2 )
\notag \\
&\quad
- \tr ( a_1 a_2 )^2 + \tr ( a_1 a_1 ) \,\tr ( a_2 a_2 ) + \tr ( a_1 a_1 a_2 a_2 ) - \tr ( a_1 a_2 a_1 a_2 ) 
\Big\},
\notag \\[1mm]
O^{\rep{2,1^2},\rep{2^2}} &= - \frac{4}{\sqrt 6} \Big\{ \tr ( a_1 )^2 \,\tr ( a_2 a_2 )-\tr ( a_1 ) \,\tr ( a_2 ) \,\tr ( a_1 a_2 )
\notag \\
&\quad
+ \tr ( a_1 a_2 )^2 - \tr ( a_1 a_1 ) \,\tr ( a_2 a_2 )+\tr ( a_1 a_1 a_2 a_2 )-\tr ( a_1 a_2 a_1 a_2 ) 
\Big\},
\label{diag ops length-four}
\end{align}
where we used the notation $\tr(a_1 a_2 a_3 a_4) = \tr (\Phi^{a_1} \Phi^{a_2} \Phi^{a_3} \Phi^{a_4})$. 
Their free two-point functions are given by \eqref{two-pt_rep_basis}.


\section{Relation to covariant approach}\label{sec:BHR}
\subsection{$O(N_f) \times S_{2n} $ CG coefficients}

In \cite{BHR08} a general construction of free-field diagonal operators was given, where the operators are built out of fields transforming in a general representation $V$ of a general global symmetry group $G$.
This general construction requires the explicit computation of the Clebsch-Gordan coefficients, which decompose the tensor products $ V^{ \otimes m }$ in terms of irreducible representations of $ G \times S_m $. 
Our approach to the free-field diagonal operators in the current paper is similar to \cite{BHR08} since both diagonal operators carry the same representation labels.
However, the two operators look slightly different, since the mesonic operators discussed here do not involve the CG coefficients. 

We will show that the two operators are identical, by giving an explicit formula for the relevant CG coefficients.
Recall that in \cite{BHR08} the colour and flavour indices are treated separately. A gauge-covariant operators turn into gauge-invariant operators by combining indices appropriately.
Instead, we may fix a gauge in the gauge-covariant form, and then combine them into gauge-invariant operators. In this way we can reproduce the mesonic operators.
The CG coefficients are related to the product of branching coefficients and Kronecker delta's.



\bigskip
Let us just focus on the CG problem for the $2n$-fold tensor product of 
$V_F$ with $G=O(N_f)$. In particular, we are interested in the one-dimensional representation of both $S_{2n}$ and $SO(N_f)$.  
Let $v_a$ be basis vectors in $V_F$. Consider 
   \bea 
   v_{ a_1} \otimes v_{ a_2} \otimes \cdots \otimes  v_{ a_{2n} } 
   \eea
   A vector $v_{ \rho}$, parametrised by a permutation $ \rho \in S_{ 2n }$ 
  which controls the pairwise contractions, is invariant under $ O ( N_f )$ : 
  \bea 
  v_{ \rho} = \delta^{ a_{ \rho(1)} a_{ \rho(2)} } \cdots \delta^{ a_{ \rho(2n-1)} a_{ \rho(2n )} }  ~~ v_{ a_1} \otimes \cdots \otimes v_{ a_{ 2n  } }  
\label{def:vrho}
  \eea
   Following a theme we have seen repeatedly, whenever we have some invariants parametrised by permutations, in the present case tensor products of vectors invariant 
    under $ O ( N_f )$, we must ask about the redundancy in the description. 
    Here the redundancy is 
    \bea 
    v_{ \rho } = v_{ \gamma \rho} 
    \eea
    for $ \gamma \in S_n [ S_2 ] $. 
    So we can also write 
    \bea 
    v_{ \rho} = { 1 \over 2^n n! } \sum_{ \gamma \in S_{n} [ S_2 ] } v_{ \gamma \rho  } 
    \eea
Again, following a familiar theme, disentangle these equivalence classes by using representation theory. 
The first step is to define 
    \bea\label{vproj} 
    v^{ \Lambda_1 }_{ IJ } = { 1 \over ( 2n ) ! } 
     \sum_{ \rho \in S_{ 2n } } D^{ \Lambda_1}_{ I J } ( \rho )  ~  v_{ \rho } 
    \eea
   Exploiting the invariance in the Fourier transformed basis,
   \bea 
   v^{ \Lambda_1 }_{ I J } && =  { 1 \over ( 2n ) ! } { 1 \over \abs{S_n[S_2]} } 
     \sum_{ \rho \in S_{ 2n } } D^{ \Lambda_1}_{ I J } ( \rho )  ~ \sum_{ \gamma \in S_n [ S_2 ] }  v_{ \gamma \rho } \cr 
      && ={ 1 \over ( 2n ) ! } { 1 \over \abs{S_n[S_2]} } \sum_{ \gamma \in S_n [ S_2 ] }
     \sum_{ \rho \in S_{ 2n } } D^{ \Lambda_1}_{ I J } (  \rho \gamma^{-1}  )  ~   v_{ \rho } \cr 
     && = { 1 \over ( 2n ) ! } 
     \sum_{ \rho \in S_{ 2n } } D^{ \Lambda_1}_{ I J } ( \rho  \, p_{1_{ S_n[ S_2 ] } }  ) ~  v_{  \rho }  \cr 
     &&   = { 1 \over ( 2n ) ! } 
     \sum_{ \rho \in S_{ 2n } }  D^{ \Lambda_1}_{ I K } ( \rho  ) B^{ \Lambda_1 \rightarrow 1 }_{ K  } B^{ \Lambda_1 \rightarrow 1 }_{ J }  v_{ \rho}  
     \eea
    The branching coefficient is zero if $ \Lambda_1 $ is not even. 
The invariant vectors are therefore\footnote{See e.g. Appendix B of \cite{BHR07} of the relevant fact from linear algebra.}
    \bea 
    v^{ \Lambda_1}_{ I } \equiv { 1 \over ( 2n ) ! } \, 
     \sum_{ \rho \in S_{ 2n } }  D^{ \Lambda_1}_{ I K } ( \rho  ) B^{ \Lambda_1 \rightarrow 1 }_{ K  } ~ v_{ \rho} 
\label{def:inv-vec-v}    \eea
Recalling  the definition of $ v_{\rho}$
    \bea 
      v^{ \Lambda_1}_{ I } = { 1 \over ( 2n ) ! } 
     \sum_{ \rho \in S_{ 2n } }  D^{ \Lambda_1}_{ I K } ( \rho  ) B^{ \Lambda_1 \rightarrow 1 }_{ K  } ~ \delta^{ a_{ \rho(1)} a_{ \rho(2)} } \cdots \delta^{ a_{ \rho(2n-1)} a_{ \rho(2n )} }  ~~ v_{ a_1} \otimes \cdots \otimes v_{ a_{ 2n  } }
    \eea
This can be written in terms of a CG coefficient coupling $V_F^{ \otimes 2n }$ to the state $(\Lambda_1 , I) \times \emptyset$ of $ S_{ 2n } \times O ( N_f )$,
where $\Lambda_1$ refers to the representation of $S_{2n}$,
$I$ is the state label of $\Lambda_1$,
and $\emptyset$ the one-dimensional representation of $O(N_f)$:
   \bea 
   v^{ \Lambda_1}_{ I } = 
C^{ \vec a }_{ \Lambda_1, I }  v_{ a_1} \otimes \cdots \otimes v_{ a_{ 2n  } }
   \eea
Multiplicity labels are not needed.

It is tempting to identify the CG coefficients as 
\bea 
C^{ \vec a }_{ \Lambda_1, I }   =  { 1 \over ( 2n ) ! } \, 
     \sum_{ \rho \in S_{ 2n } }  D^{ \Lambda_1}_{ I K } ( \rho  ) B^{ \Lambda_1 \rightarrow 1 }_{ K  } ~ \delta^{ a_{ \rho(1)} a_{ \rho(2)} } \cdots \delta^{ a_{ \rho(2n-1)} a_{ \rho(2n )} } 
\label{BHR-Clebsch}
\eea
There is an important subtlety of normalisation which has to be considered when comparing  to \cite{BHR08}. The key point is that the normalisation of the above-defined    $C^{ \vec a }_{ \Lambda_1, I }  $ is 
\begin{equation}
\sum_{\vec a} C^{ \vec a }_{ \Lambda_1, I, \tau_{\Lambda} } C^{ \vec a }_{ \Lambda'_1, I', \tau'_{\Lambda} } = N_{CG} \, \delta_{\Lambda_1 \Lambda'_1} \delta_{I I'} \delta_{\tau_{\Lambda} \tau'_{\Lambda}} 
\end{equation}
where $N_{CG}$ is determined below.
Let us compute $ v^{ \Lambda}_{I} v^{ \Lambda}_{ J }  $ (no sum over $\Lambda_1$)
\begin{align}
\langle v^{ \Lambda_1 }_I , v^{ \Lambda_1 }_J \rangle 
&= { 1 \over ( 2n)!^2} \,
\sum_{\vec a, \vec b} C^{ \vec a }_{ \Lambda_1, I } C^{ \vec b }_{ \Lambda_1, J } 
\bigl\langle v_{\vec a} , v_{\vec b} \bigr\rangle
\\
&= { 1 \over  ( 2n)!^2}  \sum_{ \rho_1 } \sum_{ \rho_2} 
D^{ \Lambda_1}_{ I K } ( \rho_1 ) D^{ \Lambda_1}_{ J L } ( \rho_2 )
B^{ \Lambda_1 \rightarrow 1 }_{ K  } B^{ \Lambda_1 \rightarrow 1 }_{ L  } 
\langle v_{ \rho_1 } , v_{ \rho_2 } \rangle .
\end{align} 
Since $C^{ \vec a }_{ \Lambda_1, I }  = C^{ \sigma(\vec a) }_{ \Lambda_1, I } $ for any $\sigma \in S_{2n}$, the first line is also written as
\begin{equation}
\langle v^{ \Lambda_1 }_I , v^{ \Lambda_1 }_J \rangle =  { N_{CG} \over  ( 2n)!^2}  \, \delta_{IJ} \,.
\label{compare NCG}
\end{equation}
Note we assumed, as in \cite{BHR08}, that the vectors in $V_F $ are unit normalised
\bea 
\langle v_a , v_b \rangle = \delta_{ab} \,, \qquad
\langle v_{\vec a} , v_{\vec b} \rangle = \prod_{k=1}^n \delta_{a_k b_k} \,. 
\eea
The norm of the permutation-parametrised vectors is 
\begin{align}
\langle v_{ \rho_1} , v_{ \rho_2} \rangle 
&= 
\delta^{ a_{ \rho_1 (1)} a_{ \rho_1 (2)} } \cdots \delta^{ a_{ \rho_1 (2n-1)}  a_{ \rho_1 (2n)}  }
\delta^{ b_{ \rho_2 (1)} b_{ \rho_2 (2)} } \cdots \delta^{ b_{ \rho_2 (2n-1)} b_{ \rho_2 (2n)}  } \,
\langle v_{ a_1} \cdots v_{a_{2n}} , v_{ b_1} \cdots v_{ b_{2n}} \rangle 
\notag \\
&=  W ( \rho_2^{-1} \rho_1 ) 
\end{align}
This determines $N_{CG}$ via
\begin{align}
\langle v^{ \Lambda_1 }_I , v^{ \Lambda_1 }_{J} \rangle
& = {  \delta_{IJ}  \over ( 2n ) ! \, d_{\Lambda_1} } \sum_{ \rho  } 
W ( \rho )   \chi_{ \Lambda_1} (  \rho^{-1} \, p_{1_{S_n[S_2]}} ) 
= \delta_{IJ}  \,  { N_f^n \omega_{\Lambda_1/2} (\Omega_{2n}^{(f)}) \over (2n)! \, d_{\Lambda_1} } \,.
\end{align}

If we substitute the CG coefficients \eqref{BHR-Clebsch} to the general diagonal operator of \cite{BHR08},
we obtain gauge-invariant operators involving a sum over permutations 
$ \rho , \alpha$. After doing the sum over $ \rho$, our diagonal operators in section \ref{sec: rep basis} can be recovered.

\subsection{Baryonic operators}\label{sec:baryonic}

We explain how to count $SO(N_f)$ singlets following \cite{BHR08}.
Let us take the flavour part of the Schur-Weyl duality
\begin{align}
V_F^{\otimes 2n}=\bigoplus_{\substack{\Lambda_1 \\ c_1(\Lambda_1)\le N_f}}
\left(V_{\Lambda_1}^{GL(N_f)} \otimes V_{\Lambda_1}^{S_{2n}}\right).
\label{SWdual_flavour}
\end{align} 
We restrict $GL(N_f)$ to $O(N_f)$, and further to $SO(N_f)$ by the projection $\pi$ \cite{KoikeTerada,CGrood},\footnote{The unitary irreducible representations of $O(N_f)$ should satisfy $c_1(\Lambda_2)+c_2(\Lambda_2)\le N_f$.}
\begin{gather}
V_{\Lambda_1}^{GL(N_f)}
= \bigoplus_{\Lambda_2} V_{\Lambda_1,\Lambda_2}\otimes V_{\Lambda_2}^{O(N_f)}
= \bigoplus_{\Lambda_2} \otimes V_{\pi(\Lambda_2)}^{SO(N_f)} \,,
\\
\dim V_{\Lambda_1, \Lambda_2} = \sum_{\beta: {\rm even}} g(\Lambda_2,\beta;\Lambda_1),
\end{gather}
where $g(A,B;C)$ is the LR coefficient \eqref{def:LR coeff}.
The singlet representations of $SO(N_f)$ have two origins.
The first origin is an $O(N_f)$ singlet. 
The other is a non-singlet of $O(N_f)$ projected by $\pi$.
An example is $\pi([1^{N_f}]) = \emptyset$, corresponding to 
\begin{align}
\Phi_{[a_1}\cdots \Phi_{a_{N_f}]}=\frac{1}{N_f!}\epsilon_{a_1\cdots a_{N_f}}
\epsilon^{b_1\cdots b_{N_f}}\Phi_{[b_1}\cdots \Phi_{b_{N_f}]}. 
\end{align} 
Following the arguments in section \ref{sec:counting_SW}, the number of $SO(N_f)$ singlet operators is counted by 
\begin{equation}
\sum_{
\substack{ R \\c_1(R)\le N_c} }
\sum_{
\substack{ \Lambda_1 \\c_1(\Lambda_1)\le N_f} }
\sum_{p=0}^n
\sum_{\beta \vdash 2p} 
\sum_{ \Lambda_2 \vdash (2n-2p)}
C(R,R,\Lambda_1) g(\Lambda_2,\beta;\Lambda_1) \,
\delta_{\pi(\Lambda_2), \emptyset} \,.
\end{equation}
The mesonic operators are counted by setting $\Lambda_2=\emptyset$ in the above formula, yielding \eqref{SW counting}. The baryonic operators correspond to $\Lambda_2 \neq \emptyset$.


\section{Mixing matrix in detail}\label{sec:mixing_detail}
\setcounter{equation}{0}

In this Appendix we derive the mixing matrix on the permutation basis 
(\ref{mixing_matrix_mesonic_singlet}).

For our convenience we call each term of the following dilatation operator $H_i$
\begin{align}
H=
H_1+H_2=
-\frac{1}{2}tr[\Phi_m,\Phi_n][\check{\Phi}^m,\check{\Phi}^n]
-\frac{1}{4}tr[\Phi_m,\check{\Phi}^n][\Phi_m,\check{\Phi}^n]
\end{align} 
where\footnote{The Hamiltonian of integrable $SO(N_f)$ spin chain is obtained by changing the coefficient of $H_2$ to $-1/(N_f-2)$ and taking the planar limit \cite{MZ02}.}
\begin{align}
H_2=
-\frac{1}{4}
tr[\Phi_m,\check{\Phi}^n][\Phi_m,\check{\Phi}^n]
=&
-\frac{1}{2}
tr(\Phi_m \check{\Phi}^n \Phi_m\check{\Phi}^n)
+\frac{1}{2}tr(\Phi_m \Phi_m \check{\Phi}^n \check{\Phi}^n)
\nonumber \\
=&H_{21}+H_{22}.
\end{align} 
The following formulae are useful, 
\begin{align}
tr([\Phi_m,\Phi_n][\check{\Phi}^m,\check{\Phi}^n])
(\Phi_a)_{ij}(\Phi_b)_{kl}
&=2([\Phi_a,\Phi_b]_{kj}\delta_{il}-[\Phi_a,\Phi_b]_{il}\delta_{kj})
\label{act_first_dilatation} \\[1mm]
tr(\Phi_m \check{\Phi}^n \Phi_m\check{\Phi}^n)
(\Phi_a)_{ij}(\Phi_b)_{kl}
&=2\delta_{ab} (\Phi_m)_{il}(\Phi_m)_{kj}
\label{act_second_1_dilatation} \\[1mm]
tr(\Phi_m  \Phi_m \check{\Phi}^n\check{\Phi}^n)
(\Phi_a)_{ij}(\Phi_b)_{kl}
&=\delta_{ab} (\Phi_m \Phi_m)_{il}\delta_{jk}+\delta_{ab} (\Phi_m \Phi_m)_{kj}\delta_{il}.
\label{act_second_2_dilatation}
\end{align}

It is convenient to consider the dilatation operator acting on  
general operators built from $SO(N_f)$ scalars in \eqref{abbr trPhi}.
The action of $H_1$ is given by 
\begin{align}
&tr([\Phi_m,\Phi_n][\check{\Phi}^m,\check{\Phi}^n])
\, tr_{2n} (\sigma \Phi_{\vec{a}})
\nonumber \\
&=2 
\sum_{i\ne j}
tr_{2n} ([\sigma,(ij)] \Phi_{a_1}\otimes \cdots \otimes 
[\Phi_{a_i},\Phi_{a_j}] \otimes \cdots  \otimes 1 \otimes \cdots \otimes \Phi_{a_{2n}})
\nonumber \\
&=2
\sum_{i\ne j}
\sum_{\alpha \in S_{2n}}
\delta_{2n}([\sigma,(ij)]\alpha^{-1})
tr_{2n} (\alpha \Phi_{a_1}\otimes \cdots \otimes 
[\Phi_{a_i},\Phi_{a_j}] \otimes \cdots  \otimes 1 \otimes \cdots \otimes \Phi_{a_{2n}})
\label{action_H_1}
\end{align} 
where $[\Phi_{a_i},\Phi_{a_j}]$ is in the $i$-th slot and 
$1$ is in the $j$-th slot.\footnote{In the planar limit, only the terms $j=\sigma(i)$ and $j=\sigma^{-1}(i)$ survive.}
Here the sum $\sum_{i\ne j}$ is over 
different pairs ($i,j$), i.e. we do not distinguish $(i,j)=(1,2)$ and $(2,1)$.

In order to express the above operator in 
terms of (\ref{abbr trPhi}), 
we consider 
the decomposition $S_{2n} \rightarrow S_{2n-1}\times S_1 $ \cite{0701066,1012.3884}.
Elements in $S_{2n}$ can be expressed in terms of elements in 
$S_{2n-1}$ as
\begin{align}
\{\alpha \hspace{0.1cm}| \hspace{0.1cm}\alpha \in S_{2n} \} = 
\{\beta \,|\, \beta\in S_{2n-1}^{\langle j \rangle} \} \cup \{ \beta(jk) 
\hspace{0.1cm}| \hspace{0.1cm}
k=1,2,j-1,j+1,\cdots,2n ; \hspace{0.1cm}\beta\in S_{2n-1}^{\langle j \rangle}
\},
\label{Sn-1 split}
\end{align} 
where 
$S_{2n-1}^{\langle j \rangle}$ is 
the subgroup 
obtained by removing 
the $j$-th slot from $S_{2n}$.  
We illustrate how it works 
for the case $2n=3$, $j=3$. 
Take $(i,k)=(2,1)$. When $\alpha = \beta$,
\begin{equation}
tr_3 (\beta 
\Phi_{a_1}\otimes [\Phi_{a_2},\Phi_{a_3}] \otimes 1 )
=
N_c \, tr_2 (\beta 
\Phi_{a_1}\otimes [\Phi_{a_2},\Phi_{a_3}])
=
N_c \, tr_3 ([(23),\beta] 
\Phi_{a_1}\otimes \Phi_{a_2} \otimes \Phi_{a_3}) 
\end{equation}
and when $\alpha = \beta (jk)$,
\begin{equation}
tr_3 (\beta (31) 
\Phi_{a_1}\otimes [\Phi_{a_2},\Phi_{a_3}] \otimes 1 )
=
tr_2 (\beta 
\Phi_{a_1}\otimes [\Phi_{a_2},\Phi_{a_3}])
=
tr_3 ([(23),\beta] 
\Phi_{a_1}\otimes \Phi_{a_2} \otimes \Phi_{a_3})
\end{equation}
The case $(i,k)=(2,2)$ is same as above.

We then find that 
(\ref{action_H_1}) can be written as 
\begin{align}
& 
2N_c \sum_{i\ne j}
\sum_{\beta \in S_{2n-1}^{\langle j\rangle}}
\delta_{2n}([\sigma,(ij)]\beta^{-1})
tr_{2n} ([(ij),\beta] \Phi_{\vec{a}})
\nonumber \\
&
+2\sum_{i\ne j}
\sum_{k (k\neq j)}
\sum_{\beta \in S_{2n-1}^{\langle j \rangle}}
\delta_{2n}([\sigma,(ij)](jk)\beta^{-1})
tr_{2n} ([(ij),\beta]\Phi_{\vec{a}})
\nonumber \\
&=
2 \sum_{i\ne j}
\sum_{\beta \in S_{2n-1}^{\langle j\rangle}}
\delta_{2n}([\sigma,(ij)]X^{(j)}\beta^{-1})
tr_{2n} ([(ij),\beta] \Phi_{\vec{a}}),
\end{align} 
where 
we have introduced 
\begin{align}
X^{(j)}=N_c+\sum_{k (\neq j)} (kj).
\label{def:X(j)}
\end{align}

\quad 

Next study the action of $H_{21}$,
\begin{align}
&tr(\Phi_m \check{\Phi}^n \Phi_m\check{\Phi}^n)
\, tr_{2n} (\sigma \Phi_{\vec{a}})
= 2\sum_{i\ne j}
\delta_{a_i,a_j}
tr_{2n} ( (ij) 
\sigma \Phi_{a_1}\otimes \cdots \otimes 
\Phi_m \otimes \cdots  \otimes \Phi_m \otimes \cdots \otimes \Phi_{a_{2n}})
\end{align} 
where two $\Phi_m$'s are in the $i$-th position and the $j$-th position. 
Introducing the flavour contraction operator 
acting on two $\Phi$'s at $(i,j)$, 
\begin{align}
C_{(ij)}^f \Phi_a \otimes \Phi_{b}=\delta_{ab} \Phi_m \otimes \Phi_{m}, 
\end{align}
we have 
\begin{eqnarray}
tr(\Phi_m \check{\Phi}^n \Phi_m\check{\Phi}^n)
\, tr_{2n} (\sigma \Phi_{\vec{a}})
= 2\sum_{i\ne j}
C_{(ij)}^f
tr_{2n} ( (ij) 
\sigma 
\Phi_{\vec{a}}).
\end{eqnarray}

\quad

The action of $H_{22}$ is computed using (\ref{act_second_2_dilatation}) as 
\begin{align}
&
tr(\Phi_m  \Phi_m \check{\Phi}^n\check{\Phi}^n)
\, tr_{2n} (\sigma \Phi_{\vec{a}})
\nonumber \\
&=
2 \sum_{i\neq j}
\delta_{a_i a_j}
tr_{2n}((ij)\sigma \Phi_{a_1}\otimes \cdots \otimes \Phi_{m}\Phi_{m} \otimes \cdots 
\cdots \otimes 1\otimes \cdots \Phi_{a_{2n}})
\nonumber \\
&=
2 \sum_{i\neq j}
\delta_{a_i a_j}
\sum_{\alpha \in S_{2n}}
\delta_{2n}((ij)\sigma \alpha^{-1})
S^{(ij)}(\alpha),
\label{H_22_action_an_expression} \\[1mm]
S^{(ij)}(\alpha)
&:=tr_{2n}(\alpha \Phi_{a_1}\otimes \cdots \otimes \underbrace{\Phi_{m}\Phi_{m}}_i \otimes \cdots 
\cdots \otimes \underbrace{1}_j \otimes \cdots \Phi_{a_{2n}})
\end{align} 
where $i,j$ represent the site of $\Phi_m\Phi_m$ and that of $1$.
We now apply the reduction $S_{2n} \rightarrow S_{2n-1}^{\langle j \rangle }\times S_1$
to $S^{(ij)}(\alpha)$.
For example for $i=2,j=1$, we have
\begin{align}
&
S^{(21)}(\beta)
=
N_c \, tr_{2n-1} (
\beta
\Phi_m \Phi_m \otimes 
\Phi_{a_3}\otimes  \cdots \otimes \Phi_{a_{2n}})
\nonumber \\
&
S^{(21)}(\beta (1k))
=
tr_{2n-1} (
\beta
\Phi_m \Phi_m \otimes 
\Phi_{a_3}\otimes  \cdots \otimes \Phi_{a_{2n}})
\end{align} 
where $\beta \in S_{2n-1}^{\langle 1 \rangle}$,
and we can use the following formula,
\begin{align}
tr_{2n-1} (
\beta
\Phi_m \Phi_m \otimes 
\Phi_{a_3}\otimes  \cdots \otimes \Phi_{a_{2n}})
=
tr_{2n} (
 (12)\beta
\Phi_m \otimes \Phi_m \otimes 
\Phi_{a_3}\otimes  \cdots \otimes \Phi_{a_{2n}}). 
\end{align} 
We then find that 
(\ref{H_22_action_an_expression}) can be expressed by
\begin{align}
&tr(\Phi_m  \Phi_m \check{\Phi}^n\check{\Phi}^n)
\, tr_{2n} (\sigma \Phi_{\vec{a}})
= 2 \sum_{i\neq j} C_{(ij)}
\sum_{\beta \in S_{2n-1}^{\langle j \rangle}}
\delta_{2n}((ij)\sigma X^{(j)}\beta^{-1})
tr_{2n}((ij)\beta \Phi_{\vec{a}}).
\end{align}

Collecting these results,
\begin{align} 
&H_1 tr_{2n} (\sigma \Phi_{\vec{a}})
=
- \sum_{i\ne j}
\sum_{\beta \in S_{2n-1}^{\langle j\rangle}}
\delta_{2n}([\sigma,(ij)]X^{(j)}\beta^{-1})
tr_{2n} ([(ij),\beta] \Phi_{\vec{a}}),
\nonumber \\
&H_{21} tr_{2n} (\sigma \Phi_{\vec{a}})
=
-\sum_{i\ne j}
C_{(ij)}
tr_{2n} ( (ij) 
\sigma \Phi_{\vec{a}}),
\nonumber \\
&H_{22} tr_{2n} (\sigma \Phi_{\vec{a}})
=
 \sum_{i\neq j}
C_{(ij)}
\sum_{\beta \in S_{2n-1}^{\langle j \rangle}}
\delta_{2n}((ij)\sigma X^{(j)}\beta^{-1})
tr_{2n}((ij)\beta \Phi_{\vec{a}}).
\label{H1-mixing-formula0}
\end{align} 
Note that 
the $N_c$-dependence appears only in $X^{(j)}$.

\quad 

Let us next focus on the mesonic singlet operators
\begin{align} 
\cO(\sigma) \equiv \cO_{\sigma}=tr_{2n} (\sigma \Phi_{a_1}\otimes \Phi_{a_1}
\otimes \Phi_{a_2}\otimes \Phi_{a_2}
\cdots \otimes \Phi_{a_{n}}\otimes \Phi_{a_{n}}).
\end{align} 
We decompose the sum over $i,j$ into 
\begin{align} 
\sum_{i\neq j}=\sum_{(i,j)}+\sum_{\langle i,j\rangle}
\end{align} 
where $(i,j)$ run over $(1,2),(3,4),\cdots$, and 
$\langle i,j\rangle$ over the other pairs.\footnote{$i = \Sigma_0 (j)$ in the sum over $(i,j)$.} 
We have 
\begin{align}\label{H1-mixing-formula}  
&H_1 \cO(\sigma)
=
- \sum_{\langle i,j\rangle}
\sum_{\beta \in S_{2n-1}^{\langle j\rangle}}
\delta_{2n}([\sigma,(ij)]X^{(j)}\beta^{-1})
\cO([(ij),\beta]) ,
\nonumber \\
&H_{21} \cO(\sigma)
=
-N_f \sum_{(i,j)}
\cO( (ij) \sigma )
-\sum_{\langle i,j\rangle}
C_{(ij)}
\cO( (ij) \sigma ),
\nonumber \\
&H_{22} \cO(\sigma)
=
 N_f \sum_{(i,j)}
\sum_{\beta \in S_{2n-1}^{\langle j \rangle}}
\delta_{2n}((ij)\sigma X^{(j)}\beta^{-1})
\cO((ij)\beta )
\nonumber \\
& \hspace{2cm}
+
 \sum_{\langle i,j\rangle}
C_{(ij)}
\sum_{\beta \in S_{2n-1}^{\langle j \rangle}}
\delta_n((ij)\sigma X^{(j)}\beta^{-1})
\cO((ij)\beta ).
\end{align} 
The mixing matrix 
\begin{align} 
H_i \cO(\sigma)=\sum_{\tau\in S_{2n}}M_{\sigma,\tau}^{(i)}O(\tau)
\end{align} 
is given by 
\begin{align}\label{MixM1} 
M^{(1)}_{\sigma,\tau}
=&
- \sum_{\langle i,j\rangle}
\sum_{\beta \in S_{2n-1}^{\langle j\rangle}}
\delta_{2n}([[\sigma],(ij)]X^{(j)}\beta^{-1})
\delta_{2n}([[\tau^{-1}],(ij)]\beta) ,
\nonumber\\
=&- 2n \left(n-1\right)
\sum_{\beta \in S_{2n-1}^{\langle 2n \rangle}}
\delta_n([[\sigma],(1,2n)]X^{(2n)}\beta^{-1})
\delta_n([[\tau^{-1}],(1,2n)]\beta) ,
\\[1mm]
M^{(21)}_{\sigma,\tau}
=&
-N_f \sum_{(i,j)}\delta_{2n}( [\tau^{-1}](ij) [\sigma] )
-\sum_{\langle i,j\rangle}
\delta_{2n}\left([\tau^{-1}] (\Sigma_0(i)j)(ij) [\sigma] (\Sigma_0(i)j)\right),
\nonumber \\
=&
-N_f n\delta_{2n}( [\tau^{-1}](12) [\sigma] )
-2n\left(n-1\right)
\delta_{2n}\left([\tau^{-1}] (2,2n)(1,2n) [\sigma] (2,2n)\right),
\label{MixM21}  
\end{align} 
where $\Sigma_0=(12)(34)\cdots (2n-1,2n)$, 
\begin{align}\label{MixM22} 
M^{(22)}_{\sigma,\tau}
=&\
 N_f \sum_{(i,j)}
\sum_{\beta \in S_{2n-1}^{\langle j \rangle}}
\delta_{2n}((ij)[\sigma] X^{(j)}\beta^{-1})
\delta_{2n}( [\tau^{-1}](ij)\beta )
\nonumber \\
&
+
 \sum_{\langle i,j\rangle}
\sum_{\beta \in S_{2n-1}^{\langle j \rangle}}
\delta_{2n}((ij)[\sigma ]X^{(j)}\beta^{-1})
\delta_{2n}([\tau^{-1}](\Sigma_0(i)j)(ij)\beta (\Sigma_0(i)j))
\nonumber \\
=&\
 N_f n
\sum_{\beta \in S_{2n-1}^{\langle 2n \rangle}}
\delta_{2n}((2n-1,2n)[\sigma] X^{(n)}\beta^{-1})
\delta_{2n}( [\tau^{-1}](n-1,n)\beta )
\nonumber \\
&
+
2n\left(n-1\right)
\sum_{\beta \in S_{2n-1}^{\langle n \rangle}}
\delta_{2n}((1,2n)[\sigma] X^{(2n)}\beta^{-1})
\delta_{2n}([\tau^{-1}](2,2n)(1,2n)\beta (2,2n)).
\end{align}


\end{document}